\newtheorem{theorem}{Theorem}
\newtheorem{lemma}{Lemma}
\newtheorem{proposition}{Proposition}
\begin{document}

\begin{frontmatter}
\title{On the complexity of the vector connectivity problem}

\author[SAL]{Ferdinando Cicalese}
\ead{cicalese@dia.unisa.it}
\author[MM-IAM,MM-FAMNIT,MM-IMFM]{Martin Milani\v c
}
\ead{martin.milanic@upr.si}
\author[VER]{Romeo Rizzi}
\ead{romeo.rizzi@univr.it}


\address[SAL]{Department of Computer Science, University of Salerno,  Italy}
\address[MM-IAM]{University of Primorska, UP IAM, Muzejski trg 2, SI-6000 Koper, Slovenia}
\address[MM-FAMNIT]{University of Primorska, UP FAMNIT, Glagolja\v{s}ka 8, SI-6000 Koper, Slovenia}
\address[MM-IMFM]{IMFM, Jadranska 19, 1000 Ljubljana, Slovenia}
\address[VER]{Department of Computer Science, University of Verona, Italy}

\begin{abstract}
\begin{sloppypar}
We study a relaxation of the {\sc Vector Domination} problem called {\sc Vector Connectivity} ({\sc VecCon}).
Given a graph $G$ with a requirement $r(v)$ for each vertex $v$, {\sc VecCon} asks for a minimum cardinality set of vertices $S$ such that
every vertex $v\in V\setminus S$  is connected to $S$ via $r(v)$ disjoint paths.
In the paper introducing the problem, Boros et al.~[Networks, 2014, to appear] gave polynomial-time solutions for {\sc VecCon} in trees, cographs, and split graphs, and showed that the problem can be approximated in polynomial time on $n$-vertex graphs to within a factor of $\log n+2$, leaving open the question of whether the problem is {\sf NP}-hard on general graphs. We show that {\sc VecCon} is {\sf APX}-hard in general graphs, and {\sf NP}-hard in planar bipartite graphs and in planar line graphs.
We also generalize the polynomial result for trees by solving the problem for block graphs.
\end{sloppypar}
\end{abstract}

\begin{keyword}
vector connectivity\sep {\sf APX}-hardness\sep {\sf NP}-hardness\sep polynomial-time algorithm\sep block graphs\\
\MSC[2010] 05C40 \sep  05C85\sep 90C27 \sep 68Q25
\end{keyword}

\end{frontmatter}

\section{Introduction and background}

Recently, Boros et al.~\cite{BHHM} introduced the {\sc Vector Connectivity}  problem ({\sc VecCon}) in graphs. This problem takes as input a graph $G$ and an integer $r(v)\in \{0,1,\ldots, d(v)\}$ for every vertex $v$ of $G$, and the objective is to find a vertex subset $S$ of minimum cardinality such that every vertex $v$ either belongs to $S$, or is connected to at least $r(v)$ vertices of $S$ by disjoint paths.
If we require each path to be of length exactly $1$, we get the well-known {\sc Vector Domination} problem~\cite{MR1741406}, which is a generalization of the famous {\sc Dominating Set} and {\sc Vertex Cover} problems.

The {\sc Vector Connectivity}  problem is one of the several problems related to domination and connectivity, which have seen renewed attention in the last few years in connection with the flourishing area of
information spreading 
(see, e.g., \cite{MR1741406,Kempe:2005,Cic-Mil-Vac:2013,Dinh:2014,Ackerman:2010,Chopin:2014,Cicalese201365} and
references therein quoted).
In a viral marketing campaign one of the problems is to identify a set of targets in a (social) network
that can be influenced (e.g., on the goodness of a product) and such that from them most/all
the network can be influenced (e.g., convinced to buy the product).
The model is based on the assumption that each vertex has a threshold $r(v)$
such that when $r(v)$ neighbors are influenced, also $v$ will get convinced too.
Assume now that for  $v$ it is not enough that $r(v)$ neighbors are convinced about the product.
Vertex $v$ also requires that their motivations are independent.
A way to model this ``skeptic'' variant of influence spreading is to require that each vertex in the network must be reached by $r(v)$
vertex-disjoint paths originating in the target set.

Another scenario where the vector connectivity problem arises is the following: Each vertex in a network produces a certain amount of a given good. We want to place in the network warehouses where the good can be stored.
For security/resilience reasons it is better if from each source to each destination (warehouse)
only a small amount of the good (e.g., one unit) travels at once.
In particular, it is preferred if the units of good  from one location to the different warehouses travel on different routes.
This reduces the risk that if delivery gets intercepted or attacked or interrupted by a fault on the network
a large amount of the good gets lost. It is not hard to see that finding the minimum number of warehouses
given the amount of units produced at each vertex coincides with the vector connectivity problem.

Boros et al.~developed polynomial-time algorithms for {\sc VecCon} on split graphs, cographs, and trees, and showed how to model the problem as a minimum submodular cover problem, leading to a polynomial-time algorithm approximating {\sc VecCon} within a factor of $\ln n +2$ on all $n$-vertex graphs. One of the questions left open in that paper was whether on general graphs, {\sc VecCon} is polynomially solvable or {\sf NP}-hard.
In this paper, we answer this question by showing that {\sc VecCon} is {\sf APX}-hard (and consequently {\sf NP}-hard) in general graphs.
Our reduction is from the {\sc Vertex Cover} problem in cubic graphs. Simple modifications of the hardness proof allow us to also show that
{\sc VecCon} is hard in several graph classes for which the {\sc Vertex Cover} problem is polynomially solvable, such as
bipartite graphs and line graphs.

Our hardness results remain valid for input instances in which the vertex requirements $r(v)$ are bounded by $4$.
On the other hand, we show that {\sc VecCon} can be solved in polynomial time for requirements bounded by $2$, thus leaving open only the case
with maximum requirement $3$. We also develop a polynomial-time solution for {\sc VecCon} in block graphs, thereby generalizing the result
by Boros et al.~\cite{BHHM} showing that {\sc VecCon} is polynomial on trees. This result is obtained by introducing a more general problem called {\sc Free-Set Vector Connectivity} ({\sc FreeVecCon} for short), and developing an algorithm that reduces an instance of {\sc FreeVecCon} to
solving instances of {\sc FreeVecCon} on  biconnected components of the input graph.

Vertex covers and dominating sets in a graph $G$ can be easily characterized as hitting sets of derived hypergraphs
(of $G$ itself, and of the closed neighborhood hypergraph of $G$, respectively). We give a similar characterization of
vector connectivity sets.

The paper is structured as follows. In Section~\ref{sec:prelim}, we collect all the necessary definitions.
In Section~\ref{sec:char-Menger}, we develop a characterization of vector connectivity sets as hitting sets of a derived hypergraph, which is
followed with hardness results in Section~\ref{sec:hard}.
A polynomial reduction for a problem generalizing {\sc VecCon} to biconnected graphs is given
in Section~\ref{sec:reduction}. Some of the algorithmic consequences of this reduction are examined in
Section~\ref{sec:poly}. We conclude the paper with some open problems in Section~\ref{sec:concl}.

\subsection{Definitions and Notation}\label{sec:prelim}

All graphs in this paper are simple and undirected, and will be denoted by $G=(V,E)$, where $V$ is the set of vertices and $E$ is the set of edges.
We use standard graph terminology. In particular, the degree of a vertex $v$ in $G$ is denoted by $d_G(v)$,
the neighborhood and the closed neighborhood of a vertex $v$ are denoted by $N_G(v)$ and $N_G[v]$, respectively,
and $V(G)$ refers to the vertex set of $G$. Moreover, for a set $X\subseteq V(G)$, we define $N_G(X) = \left(\cup_{v\in X}N_G(v)\right)\setminus X$ and
denote by $G[X]$ the subgraph of $G$ induced by $X$.
Given a graph $G=(V,E)$, a set $S\subseteq V$ and a vertex $v\in V\setminus S$, a {\em $v$--$S$ fan of order $k$} is a collection of $k$ paths $P_1,\ldots, P_k$ such that (1) every $P_i$ is a path connecting $v$ to a vertex of $S$, and (2) the paths are pairwise vertex-disjoint except at $v$, i.e., for all $1\le i<j\le k$, it holds that $V(P_i)\cap V(P_j) = \{v\}$.

Given a graph $G = (V,E)$ and an integer-valued function $r:V\to\mathbb{Z}_+ = \{0,1,2,\ldots\}$, a {\it vector connectivity set} for $(G,r)$ is a set $S\subseteq V$ such that there exists a $v$--$S$ fan of order $r(v)$ for every $v\in V\backslash S$. We say that $r(v)$ is the {\em requirement} of vertex $v$. The {\sc VecCon} problem is the problem of finding a vector connectivity set of minimum size for $(G,r)$.
The minimum size of a vector connectivity set for $(G,r)$ is denoted by $\kappa(G,r)$.
In Boros et al.~\cite{BHHM}, it was assumed that vertex requirements do not exceed their degrees.
Since our polynomial results are developed using the more general variant of the problem, we do not impose this restriction.
At the same time, our hardness results also hold for the original, more restrictive variant.

For every $v\in V$ and every set $S\subseteq V\setminus \{v\}$, we say that $v$ is {\em $r$-linked} to $S$ if there is a $v$--$S$ fan of order $r$ in $G$. Hence, given an instance $(G,r)$ of {\sc VecCon}, a set $S\subseteq V$ is a vector connectivity set for $(G,r)$ if and only if every $v\in V\setminus S$ is $r(v)$-linked to~$S$. Given a vertex requirement function $r:V\to\mathbb{Z}$ and a non-empty set $X\subseteq V$, we define
$R(X):= \max_{x\in X}r(x)$. A graph $G$ is {\it $k$-connected} if $|V(G)|>k$ and for every $S\subseteq V(G)$ with $|S|<k$, the
graph $G-S$ is connected.

\section{A characterization of vector connectivity sets}\label{sec:char-Menger}

Menger's Theorem~\cite{zbMATH02582908} implies the following
characterization of vector connectivity sets,
showing that they are exactly
the hitting sets of a certain hypergraph derived from graph $G$ and vertex requirement function $r$.
This characterization will be used in our proof of Theorem~\ref{thm:NP-c}
in Section~\ref{sec:hard}.

\begin{sloppypar}
\begin{proposition}\label{prop:char-Menger}
For every graph $G = (V,E)$, vertex requirements $r:V\to\mathbb{Z}_+$, and a set $S\subseteq V$, the following conditions are equivalent:
\vspace{-0.2cm}
\begin{enumerate}[(i)]
  \item $S$ is a vector connectivity set for $(G,r)$.
  \item For every non-empty set $X\subseteq V$ such that $G[X]$ is connected and \hbox{$R(X)>|N_G(X)|$}, we have $S\cap X\neq \emptyset$.
\end{enumerate}
\end{proposition}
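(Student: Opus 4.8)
The plan is to prove the two implications of the equivalence separately, invoking Menger's theorem only for the direction $(ii)\Rightarrow(i)$.

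\smallskip\noindent\emph{Direction $(i)\Rightarrow(ii)$.} I would argue by contradiction. Suppose $S$ is a vector connectivity set for $(G,r)$ and let $X\subseteq V$ be non-empty with $G[X]$ connected and $R(X)>|N_G(X)|$, but $S\cap X=\emptyset$. Choose $x\in X$ with $r(x)=R(X)$. Since $x\in X$ and $S\cap X=\emptyset$, we have $x\notin S$, so there is an $x$--$S$ fan $P_1,\ldots,P_{r(x)}$. Each $P_i$ starts at $x\in X$ and ends in $S\subseteq V\setminus X$, hence has a well-defined first vertex $u_i\notin X$; the predecessor of $u_i$ on $P_i$ lies in $X$, so $u_i\in N_G(X)$. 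As the paths are pairwise disjoint except at $x$ and $u_i\neq x$, the vertices $u_1,\ldots,u_{r(x)}$ are pairwise distinct, so $|N_G(X)|\ge r(x)=R(X)$, a contradiction.

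\smallskip\noindent\emph{Direction $(ii)\Rightarrow(i)$.} Again I would argue by contradiction: assume (ii) holds but some $v\in V\setminus S$ is not $r(v)$-linked to $S$. Form $G'$ from $G$ by adding a new vertex $s^*$ adjacent to every vertex of $S$; then $v$--$S$ fans of order $k$ in $G$ correspond to families of $k$ internally vertex-disjoint $v$--$s^*$ paths in $G'$ (truncate each such path at its first vertex of $S$, and conversely append $s^*$). Since $v\notin S$, the vertices $v$ and $s^*$ are non-adjacent in $G'$, so by Menger's theorem there is a set $T\subseteq V(G')\setminus\{v,s^*\}=V\setminus\{v\}$ meeting every $v$--$s^*$ path of $G'$ — equivalently, every $v$--$S$ path of $G$ — with $|T|<r(v)$. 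Let $X$ be the connected component of $v$ in $G-T$. Then $G[X]$ is non-empty and connected; a vertex of $N_G(X)\setminus T$ would be joined to $X$, hence lie in the component of $v$ in $G-T$, i.e.\ in $X$ — impossible — so $N_G(X)\subseteq T$ and $|N_G(X)|\le |T|<r(v)\le R(X)$; and a vertex of $X\cap S$ would yield a $v$--$S$ path inside $G-T$, contradicting the choice of $T$, so $X\cap S=\emptyset$. Thus $X$ is a non-empty connected set with $R(X)>|N_G(X)|$ and $S\cap X=\emptyset$, contradicting (ii). Hence $S$ is a vector connectivity set.

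\smallskip The only real subtlety is to invoke Menger's theorem in exactly the right ``fan'' form — a single vertex $v$ versus the set $S$, with the separator $T$ forbidden from containing $v$ but allowed to meet $S$ — which the auxiliary vertex $s^*$ handles cleanly; after that, verifying that the component $X$ of $v$ in $G-T$ has the three required properties ($G[X]$ connected, $N_G(X)\subseteq T$, $X\cap S=\emptyset$) is routine.
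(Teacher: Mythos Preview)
Your proof is correct and follows essentially the same overall strategy as the paper: argue both directions by contradiction, and for $(ii)\Rightarrow(i)$ take the component of the failing vertex in $G$ minus a small separator as the witness set $X$. The one genuine difference is in $(i)\Rightarrow(ii)$: the paper invokes Menger's Theorem (in its easy direction) to bound the number of disjoint $x$--$S$ paths by $|N_G(X)|$, whereas you give a direct counting argument---each path of the fan must leave $X$ and so contributes a distinct vertex of $N_G(X)$---which avoids Menger entirely for this implication. Your auxiliary-vertex construction $s^*$ for $(ii)\Rightarrow(i)$ is just an explicit reduction to the two-vertex form of Menger; the paper instead appeals directly to the fan version, but the resulting separator and component are the same.
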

\end{sloppypar}

\begin{proof}
First, let $S$ be a vector connectivity set for $(G,r)$.
Suppose for a contradiction that there exists
a non-empty set $X\subseteq V$ such that
$G[X]$ is connected, $R(X)>|N_G(X)|$, and $S\cap X = \emptyset$.
Let $C = N_G(X)$, and let $x\in X$ be a vertex such that $r(x)>|C|$.
Since $S\cap X= \emptyset$, we have $x\not\in S$.
Moreover, the definition of $C$ implies that in the graph $G-C$,
there is no path from $x$ to $S$.
Therefore, by Menger's Theorem, the maximum number of disjoint $x$-$S$ paths
is at most $|C|$, contrary to the fact that $x$ is $r(x)$-linked to $S$
and $r(x)>|C|$.

Conversely,
suppose for a contradiction that $S\subseteq V$ is not a vector connectivity set for $(G,r)$, and
that for every non-empty set $X\subseteq V$ such that
$G[X]$ is connected and
$|N_G(X)|<R(X)$, we have $S\cap X\neq \emptyset$.
Since $S$ is not a vector connectivity set for $(G,r)$, there exists a vertex $x\in V\setminus S$ such that $x$ is not $r(x)$-linked to $S$.
By Menger's Theorem, there exists a set $C\subseteq V\setminus\{x\}$ such that $|C|<r(x)$ and
every path connecting $x$ to $S$ contains a vertex of $C$.
Let $X$ be the component of $G-C$ containing $x$.
Then, $G[X]$ is connected and $N_G(X)$ is contained in $C$, implying
$|N_G(X)|\le|C|< r(x)\le R(X)$. Hence, by the assumption on $S$, we have $S\cap X\neq \emptyset$.
But this means that there exists a path connecting $x$ to $S$ avoiding $C$, contrary to the choice of $C$.
\end{proof}

\section{Hardness results}\label{sec:hard}

We start with the {\sf NP}-hardness results.

\begin{theorem}\label{thm:NP-c}
The decision version of the {\sc VecCon} problem restricted to instances with maximum requirement $4$ is {\sf NP}-complete, even for:
\begin{itemize}
  \item $2$-connected planar bipartite graphs of maximum degree $5$ and girth at least $k$ (for every fixed $k$),
  \item $2$-connected planar line graphs of maximum degree $5$.
\end{itemize}
\end{theorem}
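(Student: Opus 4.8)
\textbf{Membership in \textsf{NP}.} Given an instance $(G,r)$ and a candidate set $S$, one verifies in polynomial time that $S$ is a vector connectivity set: for each $v\in V\setminus S$ compute, by a single unit‑capacity max‑flow computation (split every vertex $w\ne v$ into an in/out pair joined by a unit‑capacity arc, add a super‑sink adjacent to all of $S$), the maximum order of a $v$–$S$ fan, which by Menger's theorem equals a minimum $v$–$S$ vertex cut, and check that it is at least $r(v)$. Hence it remains to prove \textsf{NP}‑hardness in the two stated graph classes.

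\textbf{The reduction and the correspondence.} I would reduce from \textsc{Vertex Cover} in cubic graphs, and, for the planar conclusions, from \textsc{Vertex Cover} in $2$‑connected cubic planar graphs (which remains \textsf{NP}‑hard). Write $\tau(G)$ for the minimum size of a vertex cover of $G$. Given such a cubic graph $G=(V,E)$, construct $(G',r)$ by keeping a copy of each $u\in V$ with requirement $r(u)=0$, and replacing each edge $e=uv$ by a small \emph{edge gadget} $H_e$ glued to $u$ and $v$, with all gadget requirements at most $4$. The gadget is to be designed so that, by Proposition~\ref{prop:char-Menger}, the inclusion‑minimal connected sets $X$ with $R(X)>|N_{G'}(X)|$ are exactly: (a) for each $e=uv$, a \emph{choice set} $X_e$ with $\{u,v\}\subseteq X_e\subseteq V(H_e)\cup\{u,v\}$, $X_e\cap V=\{u,v\}$, $|N_{G'}(X_e)|\le 3$ and $R(X_e)=4$; and possibly (b) inside each $H_e$ a fixed number of ``local'' obstructions hittable only by vertices of $H_e$. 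By Proposition~\ref{prop:char-Menger}, $S$ is a vector connectivity set for $(G',r)$ iff it hits every such $X$. Since every gadget vertex lies only in obstructions of its own gadget whereas each $u\in V$ lies in the choice sets of all three incident edges, and since $\tau(G)<|E|$, an optimal $S$ may be taken of the form (all forced gadget vertices) $\cup\,C$ with $C\subseteq V$ a minimum vertex cover of $G$; thus $\kappa(G',r)=\tau(G)+c$ for a constant $c$ depending only on the number of edges of $G$, and $G$ has a vertex cover of size $\le k$ iff $(G',r)$ has a vector connectivity set of size $\le k+c$. The graph $G'$ is planar when $G$ is (the gadgets are drawn in the faces around $e$), has maximum degree $\le 5$ (each $u\in V$ keeps degree $3$ towards its three gadgets, the remaining degree being absorbed inside the gadgets, whose vertices have degree $\le 5$), and is $2$‑connected when $G$ is (each $H_e$ is attached to both $u$ and $v$ through two internally disjoint paths, so no vertex becomes a cut vertex). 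Requirements are $\le 4$, and since the requirement‑$4$ gadget vertices can be given degree $\ge 4$ and the degree‑$2$ vertices introduced below can be given requirement $\le 2$, the construction also respects the more restrictive variant $r(v)\le d_{G'}(v)$.

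\textbf{The two variants.} For \emph{planar bipartite graphs of girth $\ge k$ and maximum degree $5$}: subdivide every edge of $G'$ the same odd number $t\ge k$ of times and give every new (degree‑$2$) vertex requirement $0$. Then every cycle length gets multiplied by the even number $t+1$, so the graph becomes bipartite; its girth exceeds $k$; planarity, maximum degree, and $2$‑connectedness are preserved; and the obstruction structure, hence $\kappa$, is unchanged, because requirement‑$0$ degree‑$2$ vertices cannot create a minimal obstruction and the subdivided edges are traversed transparently by the sets $X_e$ and the local obstructions. For \emph{planar line graphs of maximum degree $5$}: I would carry out the same reduction at the level of a pre‑image graph, i.e.\ build a planar graph $F$ of maximum degree $\le 4$ in which every degree‑$4$ vertex is a cut vertex and $d_F(x)+d_F(y)\le 7$ for every edge $xy$ (so that $L(F)$ is planar of maximum degree $\le 5$), obtained from (a subdivision of) $G$ by replacing each vertex by a small tree and each gadget by an appropriate edge set; set $G'=L(F)$, transport $r$ to $E(F)$, and re‑establish the obstruction/hitting‑set correspondence in $L(F)$ (edges of $F$ sharing a vertex form a clique in $L(F)$, which is compatible with the gadget logic).

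\textbf{Main obstacle.} The crux is the design of $H_e$ so that the choice set $X_e$ has boundary at most $3$ — which is exactly what makes requirement $4$ suffice — even though each endpoint of $e$ has two \emph{other} incident edges whose gadgets lie outside $X_e$ and naively contribute two boundary vertices apiece; this forces the gadget to route the two ``foreign'' attachments at each endpoint so that together they reach $X_e$ through few vertices, while simultaneously keeping planarity, maximum degree $5$, $2$‑connectedness, and (in the respective cases) bipartiteness with large girth or the line‑graph property, and while ensuring that no spurious obstruction arises that would decouple the choice from $\{u,v\}$. Achieving boundary $2$ here would yield hardness for requirement $3$, which is the case left open, so some genuine use of the cubic structure is unavoidable. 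The remaining work — checking that the only inclusion‑minimal sets $X$ with $R(X)>|N_{G'}(X)|$ are the $X_e$ and the fixed local ones — is the principal bookkeeping step, carried out via Proposition~\ref{prop:char-Menger} by a case analysis on the intersection pattern of a candidate $X$ with the gadgets.
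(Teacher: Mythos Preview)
Your high-level strategy---reduce from \textsc{Vertex Cover} in $2$-connected cubic planar graphs, use Proposition~\ref{prop:char-Menger} to recast feasibility as a hitting-set condition, and then subdivide for the bipartite/high-girth variant---matches the paper. However, the proposal stops exactly where the real work begins: you never construct the gadget $H_e$, and you yourself flag its design as the ``main obstacle'' and leave it unresolved. Without the gadget there is no proof. Worse, the shape you postulate for the obstruction structure appears unworkable: you want a \emph{single} choice set $X_e$ per edge containing \emph{both} endpoints $u,v$ with $|N_{G'}(X_e)|\le 3$, but each endpoint must still reach its two other incident gadgets outside $X_e$, which already forces at least four boundary vertices unless those foreign gadgets share attachment points---and that would destroy the independence you need across edges. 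Your line-graph plan (build a pre-image $F$ with degree and cut-vertex constraints, then take $L(F)$) is likewise only a specification of desiderata, not a construction.

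The paper's gadget is quite different in shape. It replaces each edge $e=xy$ by a path $(x,w_{x,e},w_e,w_{y,e},y)$, glues triangles with tips $z_{x,e},z_{y,e}$ onto the two middle edges, and for each original vertex $x$ with incident edges $e,f,g$ adds a triangle on $\{w_{x,e},w_{x,f},w_{x,g}\}$; requirements are $r(w_{x,e})=r(w_{y,e})=4$, $r(w_e)=3$, and $0$ elsewhere. The key is that there are \emph{three} overlapping obstructions per edge, $X_e=\{x,w_{x,e},z_{x,e}\}$, $Z_e=\{z_{x,e},w_e,z_{y,e}\}$, $Y_e=\{z_{y,e},w_{y,e},y\}$, each containing at most \emph{one} original vertex; this is what lets the boundary stay at most $3$ while still tying the hitting problem to vertex cover (two vertices are needed per edge, and choosing an original endpoint plus one $z$-vertex is optimal, yielding $\kappa(G',r)=\tau(G)+|E(G)|$). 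The line-graph property is obtained not via an abstract pre-image but by the direct observation that this specific $G'$ equals $L(G_2)$, where $G_2$ is $G$ with every edge subdivided twice and a pendant added at every vertex. Your bipartite/high-girth step (odd subdivision, requirement $0$ on new vertices) is essentially the paper's.
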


\begin{proof}
Membership in {\sf NP} follows from the fact that the feasibility of a solution $S$ can be tested in polynomial time,
using, e.g., Menger's Theorem and max flow algorithms.

We first show hardness of the problem for $2$-connected planar line graphs of maximum degree $5$, and then show how to modify the construction to
obtain a $2$-connected planar bipartite graph of maximum degree $5$ and girth at least $k$ (for a fixed $k$).

The hardness reduction is from {\sc Vertex Cover} in $2$-connected cubic planar graphs, a problem shown {\sf NP}-complete by Mohar~\cite{MR1828438}.
Recall that in the {\sc Vertex Cover} problem, the input is a graph $G$ and an integer $k$, and the task is to determine whether $G$ contains a vertex cover of size at most $k$, where a {\it vertex cover} in a graph is a set of its vertices such that every edge of the graph has an endpoint in the set.

Suppose that $(G,k)$ is an instance to the {\sc Vertex Cover} problem such that $G$ is a $2$-connected cubic planar graph.
\begin{sloppypar}
We construct a graph $G' = (V', E')$ in 3 steps, using the following procedure:
\begin{enumerate}
   \item Replace each edge $e = xy$ of $G$ with a path on $5$ vertices $(x, w_{x,e}, w_e, w_{y,e}, y)$.
   Formally, delete edge $xy$, add three new vertices $w_{x,e}, w_e, w_{y,e}$ and
  edges $xw_{x,e}$, $w_{x,e}w_e$, $w_ew_{y,e}$, $w_{y,e}y$.

  \item For each edge $e$ of $G$, glue a triangle on top of $w_{x,e}w_e$ with tip $z_{x,e}$ (a new vertex),
  and a triangle on top of $w_{y,e}w_e$ with tip $z_{y,e}$ (a new vertex),
  Formally, add two new vertices $z_{x,e}$ and $z_{y,e}$ and edges $w_{x,e}z_{x,e}$, $z_{x,e}w_e$,
  $w_ez_{y,e}$, $z_{x,e}w_{y,e}$.

  \item For every vertex $x$ of $G$, let $e$, $f$, $g$ be the three edges incident with $x$ in $G$.
  Add the edges $w_{x,e}w_{x,f}$, $w_{x,e}w_{x,g}$, $w_{x,f}w_{x,g}$.
\end{enumerate}
\end{sloppypar}
The obtained graph $G'$ has $|V(G)|+5|E(G)|$ vertices and $6|V(G)|+6|E(G)|$ edges, and can be computed in polynomial time from $G$. Moreover,
since a planar embedding of $G$ can be easily transformed into a planar embedding of $G'$, we also have that $G'$ is planar. Since $G$ is $2$-connected, so is $G'$, and $G'$ is of maximum degree $5$.

Furthermore, graph $G'$ is a line graph. To see this, it suffices to observe that $G'$ is isomorphic to the graph obtained from $G$ with the following procedure:
\begin{enumerate}
  \item Subdivide each edge of $G$ twice. Let $G_1$ be the obtained graph.

  \item Add to each vertex $v$ of $G_1$ a private neighbor (that is, a vertex of degree $1$ adjacent to $v$). Let $G_2$ be the obtained graph.

  \item Take the line graph of $G_2$.
\end{enumerate}

To complete the reduction, we need to specify the requirements $r(\cdot)$ to vertices of $G'$. For every edge $e = xy$ of $G$, set $r(w_{x,e}) = r(w_{y,e}) = 4$, and $r(w_{e}) = 3$.
Set $r(v') = 0$ to all other vertices $v'$ of $G'$.

\medskip

\begin{figure}
    \centering
    \includegraphics[width=\linewidth]{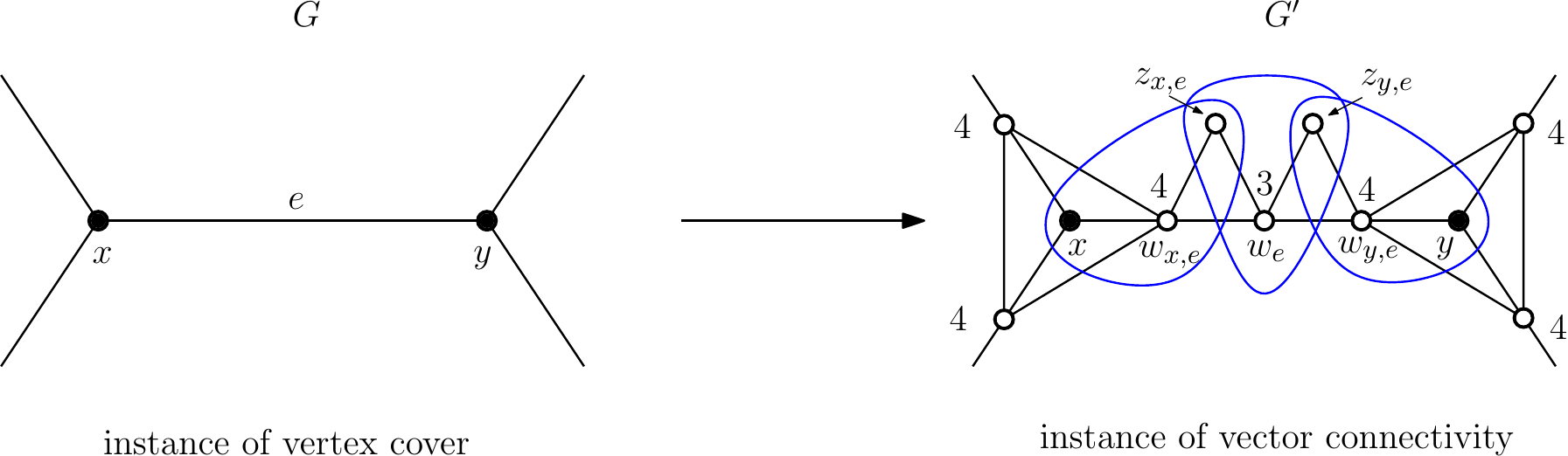}
   \caption{For each edge of $G$, every vector connectivity set for $(G',r)$ must contain at least one vertex from each of the circled sets (see Claim 1);
the three sets can be hit with two vertices, but not with just one. No matter whether one or two of the original (black) vertices are used, one of the additional (white) vertex must be used (and if no black vertex is used, two white vertices must be used).
The best we can do is to find a vertex cover for $G$ and then add one white vertex for each edge of $G$.}
   \label{fig:reduction}
\end{figure}

\begin{sloppypar}
Let $\tau(G)$ denote the minimum size of a vertex cover of $G$. The {\sf NP}-completeness of {\sc VecCon} with maximum requirement $4$ in $2$-connected planar line graphs of maximum degree $5$ will follow from the following lemma.
See Fig.\ \ref{fig:reduction} for a pictorial explanation of the reduction idea.
\end{sloppypar}

\begin{lemma}\label{lem:hard}
$\kappa(G',r) = \tau(G) + |E(G)|$.
\end{lemma}

\medskip
We will prove the lemma through a sequence of auxiliary statements.
Then we will argue how to modify the construction to obtain a bipartite graph of arbitrarily high girth.

\medskip
\noindent
{\it Claim 1: Let $S\subseteq V(G')$. Then, $S$ is a vector connectivity set for $(G',r)$ if and only if for every edge $e = xy$ of $G$,
the set $S$ contains at least one vertex from each of the sets
$X_e :=\{x,w_{x,e},z_{x,e}\}, Z_e:=\{z_{x,e}, w_e,z_{y,e}\}, Y_e:=\{z_{y,e},w_{y,e},y\}$.}

\medskip
{\it Proof of Claim 1:}
Let $e$ be an edge of $G$. Each of the sets $X_e, Y_e$, and $Z_e$ induces a connected subgraph in $G'$. Moreover
$|N_{G'}(X_e)|= |N_{G'}(Z_e)| = 3< 4 = R(X_e) = R(Z_e)$, and $|N_{G'}(Y_e)| = 2< 3 = R(Y_e)$. Thus, if $S$ is a vector connectivity set for $(G',r)$, then Proposition~\ref{prop:char-Menger} implies that $S$ contains a vertex from each of the sets $X_e$, $Y_e$, and $Z_e$.

Conversely, suppose that for every edge $e$ of $G$, set $S$ contains at least one vertex from each of the sets $X_e, Y_e$, and $Z_e$.
Let $v'$ be a vertex in $V(G')\setminus S$. Trivially, if $r(v') = 0$ then $v'$ is $r(v')$-linked to $S$.

Suppose that $r(v') = 3$. Then $v' = w_e$ for some edge $e = xy$ of $G$.
Since $S$ contains a vertex from $Z_e$, we may assume w.l.o.g.~that $z_{x,e} \in S$.
Let $e'= xy'$ be an edge incident with $x$ other than $e$.
Using the assumption on $S$, we can now find $3$ internally vertex disjoint paths linking $v'$ to $S$
with respective endpoints $x_1$, $x_2$, $x_3$ such that $x_1 \in Z_{e'}\cup\{w_{x,e}\}$, $x_2 = z_{x,e}$ and $x_3 \in Y_e$.

Suppose that $r(v') = 4$. Then, up to symmetry, $v' = w_{x,e}$ for some edge $e =xy$ of $G$.
Let $e' = xy'$ and $e'' = xy''$ be the two edges incident with $x$ other than $e$.
Using the assumption on $S$, we can now find $4$ internally vertex disjoint paths linking $v'$ to $S$ with respective
endpoints $x_1,\ldots, x_4$ such that
$x_1 \in X_{e'}\cup Z_{e'}$, $x_2 \in X_{e''}\cup Z_{e''}$,
$x_3 \in \{x,z_{x,e}\}$ and $x_4 \in \{w_e\}\cup Y_e$.

Since every vertex $v'\in V(G')\setminus S$ is $r(v)$-linked to $S$, we conclude that $S$ is a vector connectivity set for $(G',r)$.

\medskip
\noindent
{\it Claim 2:} $\kappa(G',r) \le  \tau(G) + |E(G)|$.

\medskip
{\it Proof of Claim 2:}
Let $C$ be an optimal vertex cover of $G$. First, we extend $C$ to a vertex cover $C'$ of graph $G_1$
(recall that $G_1$ is the graph obtained from $G$ by a double subdivision of each edge).
This can be done by adding exactly one additional vertex for each edge of $G$. Hence, $|C'| \le |C|+|E(G)|$.
Now, consider the set $S\subseteq V(G')$  obtained as follows:
\begin{enumerate}[(1)]
  \item Put in $S$ all vertices of $C$.

  \item For every edge $e = xy$ of $G$, label the vertices of the $4$-vertex path replacing $e$ in $G_1$ as $(x,z_{x,e}',z_{y,e}',y)$.
    Since $z_{x,e}'z_{y,e}'\in E(G_1)$ and $C'$ is a vertex cover of $G_1$, we have $C'$ contains either $z_{x,e}'$ or $z_{y,e}'$.
    If $z_{x,e}'\in C'$, then add $z_{x,e}$ to $S$. Otherwise, add $z_{y,e}$ to $S$.
\end{enumerate}
It can be verified using Claim 1 that   $S$ is a vector connectivity set for $(G',r)$ of total size at most $\tau(G) + |E(G)|$. This
completes the proof of the claim.

\bigskip
For a subset $S$ of $V(G')$, let $n(S)$ denote the total number of elements of the form $w_{x,e}$, $w_e$, $w_{y,e}$ contained in $S$. Equivalently, this is the number of non-simplicial vertices of $G'$ contained in $S$. (A vertex in a graph is said to be {\it simplicial} if its neighborhood forms a clique.
The closed neighborhood of a simplicial vertex $v$ is said to be a {\it simplicial clique (rooted at $v$)}.)

\medskip
\noindent
{\it Claim 3: There is a minimum vector connectivity set for $(G',r)$ with $n(S) = 0$.}

\medskip

{\it Proof of Claim 3:}
Let $S$ be a minimum vector connectivity set for $(G',r)$ that minimizes $n(S)$. Suppose for a contradiction that $n(S)>0$.
By symmetry, we may assume that there exists a vertex $w\in S$ such that $w\in \{w_{x,e},w_{e}\}$ for some edge $e = xy$ of $G$.
By minimality of $S$ and Claim 1, set $S$ does not contain $z_{x,e}$ (otherwise, $S\setminus \{w\}$ would be a vector connectivity set smaller than $S$).
Let $S' = (S\setminus \{w\})\cup\{z_{x,e}\}$. By Claim 1, $S'$ is a vector connectivity set for $(G',r)$. Since $|S'| \le |S|$, set $S'$ is a minimum vector connectivity set. However, we have $n(S')<n(S)$, contradicting the choice of $S$.

\medskip
We are now ready to complete the proof of Lemma~\ref{lem:hard}.
\medskip

{\it Proof of Lemma~\ref{lem:hard}:} Claim 1 established that $\kappa(G',r) \le  \tau(G) + |E(G)|$.

The proof of the reverse inequality can be derived from Claims 2 and 3, as follows. By Claim 3, there exists a minimum vector connectivity set $S$ for $(G',r)$ with $n(S) = 0$. By Claim 2 and since $n(S) = 0$, set $S$ has the following property: for every edge $e=xy$ of $G$, set $S$ contains at least one vertex from each of the sets $\{x,z_{x,e}\}$, $\{z_{x,e},z_{y,e}\}$, $\{z_{y,e},y\}$. Assuming that for every edge $e = xy$ of $G$, the vertices of the $4$-vertex path replacing $e$ in $G_1$ are labeled as $(x,z_{x,e}',z_{y,e}',y)$, the above property implies that the set obtained from
$S$ by replacing each vertex of the form $z_{x,e}$ with $z_{x,e}'$, and
each vertex of the form $z_{y,e}$ with $z_{y,e}'$, is a vertex cover of $G_1$. Therefore, the vertex cover number of $G_1$ is at most $|S| = \kappa(G',r)$.
Since the vertex cover numbers of $G$ and $G_1$ are related with the equality $\tau(G) = \tau(G_1)-|E(G)|$ (see, e.g.,~\cite{MR0351881}), it follows that
$\tau(G) = \tau(G_1)- |E(G)| \le \kappa(G',r)- |E(G)|$, as desired.

\bigskip

It remains to show how to modify the construction to obtain a bipartite graph of girth at least $k$
(while keeping planarity, $2$-connectivity and maximum degree).
We take the graph $G'$ and subdivide each edge $2k+1$ times. Since we subdivided each edge an odd number of times, the obtained graph $\tilde G$
is bipartite. Clearly, girth of $\tilde G$ is more than $k$, while planarity, $2$-connectivity and maximum degree are maintained.
On original vertices, we keep the requirements as above. To each new vertex, we assign requirement $0$. Let $\tilde r:V(\tilde G)\to \{0,3,4\}$ be the new requirement function.
In this setting, equality $\kappa(\tilde G, \tilde r) = \tau(G) + |E(G)|$ holds, which will establish {\sf NP}-completeness.

The inequality $\le$ can be proved similarly as the inequality  $\kappa(G', r) \le \tau(G) + |E(G)|$ above in Claim 1.
For the proof of the reverse inequality, we need to modify Claims 1 and 3 appropriately. First, observe that each edge of $G'$ is contained in a unique simplicial clique (rooted at a vertex of the form $x$ or $z_{x,e}$ where $x\in V(G)$ and $e\in E(G)$). Hence, we can associate to each edge $e'$ of $G'$ the simplicial vertex $s(e')$ of $G'$ such that $e'\subseteq N_{G'}[s(e')]$. For a simplicial vertex $s$ of $G'$, let $A(s) = \{s\}\cup B(s)$, where $B(s)$ is the set of all vertices in $\tilde G$ that were introduced on some edge $e'$ of $G'$ with $s(e') = s$ in the process of constructing $\tilde G$ from $G'$.

In this setting, Claim 1 is replaced with the following claim.

\medskip
\noindent
{\it Claim 1': Let $\tilde S$ be a subset of $V(\tilde G)$. Then, $\tilde S$ is a vector connectivity set for $(\tilde G, \tilde r)$ if and only if
 for every edge $e =xy$ of $G$, set
$\tilde S$ contains at least one vertex from each of the sets
$A(x) \cup \{w_{x,e}\} \cup A(z_{x,e})$, $A(z_{x,e}) \cup \{w_e\} \cup A(z_{y,e})$, $A(z_{y,e}) \cup \{w_{y,e}\} \cup A(y)$.}

\medskip

The proof is similar to that of Claim 1.

For a subset $\tilde S$ of $V(\tilde G)$, let $n(\tilde S)$ denote the total number of non-simplicial vertices of $\tilde G$ contained in $\tilde S$.
Claim 3 is replaced with the following claim.

\medskip
\noindent
{\it Claim 3': There is a minimum vector connectivity set for $(\tilde G, \tilde r)$ with $n(\tilde S) = 0$.}

\medskip

The proof is similar to that of Claim 3. Vertices of the form $x\in V(G)$ and $w_{e}$ for $e\in E(G)$ are handled similarly as in the proof of Claim 2.
If $\tilde S$ contains a non-simplicial vertex $w$ of degree $2$, we can simply replace it with vertex $s$, where $s$ is the simplicial vertex of $\tilde G$ such that $w\in A(s)$.
\end{proof}

A similar approach as one used to prove Theorem~\ref{thm:NP-c} can be used to show the following inapproximability result.
Recall that {\sf APX} is the class of problems approximable in polynomial time to within some constant,
and that a problem $\Pi$ is {\it {\sf APX}-hard} if every problem
in {\sf APX} reduces to $\Pi$ via an AP-reduction~\cite{MR1734026}.
{\sf APX}-hard problems do not admit a polynomial-time approximation scheme (PTAS), unless {\sf P} = {\sf NP}.
To show that a problem is {\sf APX}-hard, it suffices to show that an {\sf APX}-complete problem
is L-reducible to it~\cite{MR1734026}.
Given two {\sf NP} optimization problems $\Pi$ and $\Pi'$, we say that
$\Pi$ is {\it $L$-reducible} to $\Pi'$ if
there exists a polynomial-time transformation
$f$ from instances of $\Pi$ to instances of $\Pi'$
and positive constants $\alpha$ and $\beta$ such that for every instance $x$ of $\Pi$, we have:
\begin{enumerate}
  \item $\text{opt}_{\Pi'}(f(x))\le \alpha \cdot \text{opt}_{\Pi}(x)$, and
  \item for every feasible solution $y'$ of $f(x)$ with objective value
  $c_2$ we
can compute in polynomial time a solution $y$ of $x$ with objective value
$c_1$ such that
  $|\text{opt}_{\Pi}(x)-c_1|\le \beta \cdot |\text{opt}_{\Pi'}(f(x))-c_2|$.
\end{enumerate}

\begin{theorem}\label{thm:APX}
{\sc VecCon} is {\sf APX}-hard. In particular,
{\sc VecCon} admits no PTAS, unless {\sf P = NP}.
\end{theorem}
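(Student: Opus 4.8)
The plan is to prove {\sf APX}-hardness by an $L$-reduction from {\sc Vertex Cover} restricted to cubic graphs, a problem known to be {\sf APX}-complete (Papadimitriou and Yannakakis; Alimonti and Kann). Note that the planarity exploited in Theorem~\ref{thm:NP-c} has to be discarded here, since {\sc Vertex Cover} admits a PTAS on planar graphs and so no reduction from a planar restriction could yield {\sf APX}-hardness. Concretely, I would run exactly the three-step construction from the proof of Theorem~\ref{thm:NP-c} on a cubic graph $G$ (no longer assumed planar or $2$-connected), obtaining an instance $(G',r)$ with requirements in $\{0,3,4\}$; the arguments behind Claim~1, Claim~3 and Lemma~\ref{lem:hard} carry over unchanged, so $\kappa(G',r)=\tau(G)+|E(G)|$. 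It then remains to verify the two defining inequalities of an $L$-reduction for suitable constants $\alpha,\beta$.

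For the first inequality I would use that $G$ is cubic: then $|E(G)|=\frac{3}{2}|V(G)|$, and since every vertex of a cubic graph covers at most three edges, $\tau(G)\ge |E(G)|/3=|V(G)|/2$, so $|E(G)|\le 3\,\tau(G)$. Hence $\kappa(G',r)=\tau(G)+|E(G)|\le 4\,\tau(G)$, and the first inequality holds with $\alpha=4$.

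The second inequality is where the actual argument lives. Given an arbitrary feasible solution $S$ for $(G',r)$, I would first normalise it in polynomial time to a feasible $S'$ with $|S'|\le|S|$ and $n(S')=0$ by iterating the exchange step of Claim~3: whenever $S'$ contains a non-simplicial vertex $w\in\{w_{x,e},w_e\}$ of some edge $e=xy$, either the tip $z_{x,e}$ already lies in $S'$, in which case $S'\setminus\{w\}$ is still feasible by Claim~1, or it does not, in which case $(S'\setminus\{w\})\cup\{z_{x,e}\}$ is feasible by Claim~1; each step lowers $n(\cdot)$ without increasing the size, so at most $|V(G')|$ steps suffice. With $n(S')=0$, Claim~1 forces $S'$ to meet each of $\{x,z_{x,e}\}$, $\{z_{x,e},z_{y,e}\}$, $\{z_{y,e},y\}$ for every edge $e=xy$. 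Setting $C:=S'\cap V(G)$ and letting $m_0$ be the number of edges with no endpoint in $C$, a short case analysis over an edge $e=xy$ shows that an edge uncovered by $C$ forces both $z_{x,e},z_{y,e}$ into $S'$, whereas any other edge forces at least one of them; as these tips are distinct for distinct edges, $|S'|\ge |C|+|E(G)|+m_0$. Adding one endpoint of each $C$-uncovered edge to $C$ produces, in polynomial time, a vertex cover $C^{*}$ of $G$ with $|C^{*}|\le |C|+m_0\le |S'|-|E(G)|\le |S|-|E(G)|$.

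Finally I would close the reduction: with $c_2=|S|$, $c_1=|C^{*}|$, and $\tau(G)=\kappa(G',r)-|E(G)|$ from Lemma~\ref{lem:hard}, we get $c_1-\tau(G)\le c_2-\kappa(G',r)$, and since $c_1\ge\tau(G)$ and $c_2\ge\kappa(G',r)$ this gives $|\tau(G)-c_1|\le|\kappa(G',r)-c_2|$, i.e.\ the second inequality with $\beta=1$. Therefore {\sc VecCon} is {\sf APX}-hard and, in particular, admits no PTAS unless ${\sf P}={\sf NP}$. I expect the main obstacle to be this second step: one must confirm that the Claim~3 exchange remains valid for \emph{arbitrary} feasible sets (not only minimum ones) and runs in polynomial time, and that the tip-counting is tight enough to bound $|C^{*}|$ by $|S|-|E(G)|$ — an additive, not merely multiplicative, relation, which is exactly what permits taking $\beta=1$.
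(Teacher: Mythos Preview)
Your proposal is correct and follows the same route as the paper: the identical $L$-reduction from {\sc Vertex Cover} in cubic graphs using the construction of Theorem~\ref{thm:NP-c}, the same $\alpha=4$ via $\tau(G)\ge |E(G)|/3$, and the same $\beta=1$ obtained by turning an arbitrary feasible $S$ into a vertex cover of size at most $|S|-|E(G)|$. The only (minor) difference is in how that vertex cover is extracted: the paper routes the argument through the doubly subdivided graph $G_1$ and the identity $\tau(G)=\tau(G_1)-|E(G)|$, whereas you count tips directly via $|S'|\ge |C|+|E(G)|+m_0$ and patch $C$ to $C^*$; your version has the advantage of making explicit that the Claim~3 exchange applies to \emph{any} feasible set (not just minimum ones) and terminates in polynomial time, a point the paper leaves implicit.
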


\begin{proof}
Since {\sc Vertex Cover} is APX-complete for cubic graphs~\cite{MR1756204}, it suffices to show that
{\sc Vertex Cover} in cubic graphs is $L$-reducible to {\sc VecCon}.
Consider the polynomial-time transformation described in the first part of the proof of Theorem~\ref{thm:NP-c},
that starts from a cubic graph $G$ (not necessarily planar or $2$-connected), an instance to
{\sc Vertex Cover}, and computes an instance $(G',r)$ of {\sc VecCon}.

By Lemma~\ref{lem:hard}, we have $\kappa(G',r) = \tau(G) + |E(G)|$.
Moreover, since $G$ is cubic, every vertex in a vertex cover of $G$
covers exactly $3$ edges, hence $\tau(G)\geq \frac{|E(G)|}{3}$.
This implies that $\kappa(G',r) = \tau(G) + |E(G)|\le 4\tau(G)$, hence the first condition
in the definition of $L$-reducibility is satisfied with
$\alpha = 4$.
The second condition in the definition of $L$-reducibility
states that for every vector connectivity set $S$ of $(G',r)$ we can
can compute in polynomial time a vertex cover $C$ of $G$
such that
$|C|-\tau(G)\le \beta \cdot |S|-\kappa(G,r)$ for some $\beta>0$.
We claim that this can be achieved with $\beta = 1$.
Indeed, the proof of Lemma~\ref{lem:hard} above shows how one can transform in polynomial time any vector connectivity set $S$
in $G'$ to a vertex cover $C$ of $G$ such that $|C|\le |S|-|E(G)|$.
Therefore,
$|C|-\tau(G)\le |S|-|E(G)|-\tau(G) = |S|-\kappa(G',r)$.
This shows that {\sc Vertex Cover} in cubic graphs is $L$-reducible to
{\sc VecCon}, and completes the proof.
\end{proof}


\section{Reduction to biconnected graphs}\label{sec:reduction}

A {\it cut vertex} in a connected graph $G$ is a vertex $v$ such that $G-v$ is disconnected.
A graph is {\it biconnected} if it is connected and has no cut vertices.
A {\it block} (or a {\it biconnected component}) of a connected graph $G$ is a maximal biconnected subgraph of $G$.
The blocks of a connected graph $G$ are connected in a tree structure, which is called the {\it block tree} of $G$, and whose leaves are
referred to as the {\em leaf blocks} of $G$. The block tree of a given connected graph can be computed in linear time.

Boros et al.~\cite{BHHM} proved that {\sc VecCon} is polynomial for trees.
In this section, we present an algorithm showing that
{\sc VecCon} is polynomial also for the larger class of
{\it block graphs}, that is, graphs every block of which is complete.
Our result will follow from a solution to a more general problem on block graphs,
which we call {\sc Free-Set Vector Connectivity} ({\sc FreeVecCon}). In this problem, a subset $F$ of {\it free vertices}
is also given as part of the input, and the requirement that $S$ is a vector connectivity set is relaxed to the requirement that every vertex
in $V\setminus S$ is $r(v)$-linked to $S\cup F$. When $F = \emptyset$, the problem is equivalent to {\sc VecCon}.
Formally, given a graph $G = (V,E)$, an integer-valued function $r:V\to \mathbb{Z}_+$ and a subset $F\subseteq V$,
a {\it vector connectivity set} for $(G,F,r)$ is a set $S\subseteq V$ such that every vertex in
$V\setminus S$ is $r(v)$-linked to $S\cup F$.
The {\sc FreeVecCon} problem is the problem of finding a vector connectivity set of minimum size for $(G,F,r)$.
Given a graph $G$, a vertex $v\in V(G)$ and a subset $S\subseteq V(G)$,
we will denote by $\kappa_G(v,S)$ the maximum $k$ such that $v$ is $k$-linked to $S$.

We present an algorithm with which we can show that if for a given hereditary class of
graphs ${\cal G}$, the {\sc FreeVecCon} problem can be solved in polynomial time on biconnected graphs in ${\cal G}$ then
the {\sc FreeVecCon} problem is also  solvable in polynomial time on graphs from $\cal G.$
Clearly, when solving the {\sc VecCon} and {\sc FreeVecCon} problems we may restrict our attention to connected graphs.

\begin{algorithm}[h!]
\caption{Reduces {\sc FreeVecCon} to the biconnected case}
\label{algo2:main}
\small
{\bf Procedure} {\tt FSVecCon}($G,F,r$)\\
{\bf Input:} A connected graph $G=(V,E)$, a subset $F\subseteq V$, a function $r:V\to \mathbb{Z}_+$.\newline
{\bf Output:} A minimum vector connectivity set for $(G,F,r)$.\\
\vspace{-0.3cm}
\begin{algorithmic}[1]
\IF {$G$ is biconnected }
\STATE{{\bf return} {\tt FSVecCon-biconnect}($G,F,r$)}
\ENDIF
\medskip
\STATE{{\bf if} $\emptyset$ is a vector connectivity set for $(G,F,r)$ {\bf then return} $\emptyset$}
\medskip
\STATE{Let $B$ be a leaf block of $G$; \; $v$ the cut vertex contained in $B$;  \; $R \leftarrow G - (V(B) \setminus \{v\})$;}
\STATE{Let $\beta$ and $\rho$ be two Boolean values computed as follows:}
\STATE{~~~$\beta = {\tt true}$ {\bf iff} every vertex $u\in V(B)\setminus \{v\}$ is $r(u)$-linked to $F\cup \{v\}$}
\STATE{~~~$\rho = {\tt true}$ {\bf iff} every vertex $w\in V(R)\setminus \{v\}$ is $r(w)$-linked to $F\cup \{v\}$;}
\medskip
\STATE{{\bf if} $\beta \wedge \rho = {\tt true}$ {\bf then return} $\{v\}$}
\medskip
\IF {$\beta \wedge \rho = {\tt false}$ {\bf and} $\beta \vee \rho = {\tt true}$ (that is, exactly one between $\beta$ and $\rho$  is {\tt true})}
\STATE {{\bf if} $\beta = {\tt true}$ {\bf then} $H \leftarrow R$ {\bf else } $H \leftarrow B$}
\IF{$F \not \subseteq V(H) \setminus \{v\}$}
\STATE{$r(v) \leftarrow \max\{r(v) - \kappa_G(v, F \setminus V(H)) + |\{v\} \setminus F|,0\}$}
\STATE{$F \leftarrow F \cup \{v\}$}
\ENDIF
\STATE{{\bf return} {\tt FSVecCon}($H,F\cap V(H), r|_{V(H)}$)}
\ENDIF
\medskip
\IF {$\beta \vee \rho = {\tt false}$} \label{algo2:block3-s}
\FOR {$i \in \{0,1,\ldots,r(v)\}$} \label{algo2:for-s}
 \STATE{$r(v) \leftarrow i$; \;\;  $S_i \leftarrow$ {\tt FSVecCon-biconnect}$(B, (F \cap V(B)) \cup \{v\}, r|_{V(B)})$}
\ENDFOR \label{algo2:for-e}
\STATE{$i^* = \max\{j \mid 0\le j\le r(v),\, |S_j| = |S_0|\}$}
\STATE{$r(v) \leftarrow r(v) - i^* + 2$}
\STATE{$S_R \leftarrow {\tt FSVecCon}(R, (F \cap V(R))\cup \{v\}, r|_{V(R)})$}
\STATE{{\bf return} $S_{i^*} \cup S_R$}
\ENDIF \label{algo2:block3-e}
\end{algorithmic}
\end{algorithm}

\begin{theorem}\label{thm:reduction-general}
Suppose that there exists an ${\mathcal O}(T)$ algorithm for
the {\sc FreeVecCon} problem on
biconnected graphs from a class ${\cal G}$. Then,
the {\sc FreeVecCon} problem on a connected graph $G$ every block of which is in ${\cal G}$
can be solved in time
\begin{equation} \label{eq:time-bound}
{\mathcal O}\left(|V(G)|^2\min\left\{r_{\rm max}, |V(G)|^{1/2}\right\} \min\left\{ r_{\max}|V(G)|, |E(G)|\right\} + |V(G)|r_{\rm max}T\right)\,,
\end{equation}
where
$r_{\rm max}=\max\{2,\max\{r(v)\mid v\in V(G)\}\}$.
\end{theorem}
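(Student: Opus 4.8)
The plan is to establish the correctness of Algorithm~\ref{algo2:main} (the procedure \texttt{FSVecCon}) and then to bound its running time. Correctness goes by induction on the number of blocks of the connected input graph $G = (V,E)$; the base case is $G$ biconnected, handled verbatim by the assumed $\mathcal{O}(T)$ routine \texttt{FSVecCon-biconnect}. For the inductive step, fix the leaf block $B$ and the cut vertex $v$ chosen by the algorithm and put $R = G - (V(B)\setminus\{v\})$, so that $V(B)\cap V(R) = \{v\}$ and $v$ lies on every $B$--$R$ path. The engine of the whole argument is an elementary consequence of Menger's theorem applied at $v$: for any target set $Q$ and any $u \in V(B)\setminus\{v\}$, at most one path of any $u$--$Q$ fan can use $v$, hence at most one can leave $B$, so $\kappa_G(u,Q)$ equals the maximum order of a $u$--fan in $B$ towards $(Q\cap V(B))\cup\{v\}$ when $v\in Q$ or $v$ is reachable from $Q$ inside $R$, and towards $Q\cap V(B)$ otherwise (symmetrically for $u \in V(R)\setminus\{v\}$). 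This is precisely the information recorded by the Booleans $\beta$ and $\rho$, and the device by which a requirement at $v$ is shifted from one side of the cut to the other.

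With this in hand one checks, branch by branch, that the set returned is a minimum vector connectivity set, using the decomposition above, Proposition~\ref{prop:char-Menger}, and the induction hypothesis. The branches ``$\emptyset$ feasible'' and ``$\beta\wedge\rho$'' are immediate: in the latter, $F\cup\{v\}$ already serves every vertex, so $\{v\}$ is feasible and optimal since $\emptyset$ was rejected. In the branch where exactly one of $\beta,\rho$ holds, say $\beta$, one argues that some optimum avoids $V(B)\setminus\{v\}$, the instance collapses to $R$, and the update $r(v)\leftarrow\max\{r(v)-\kappa_G(v,F\setminus V(H))+|\{v\}\setminus F|,0\}$ together with adding $v$ to the free set exactly records how much of $v$'s requirement is already discharged on the far side, the correction $|\{v\}\setminus F|$ accounting for the trivial fan path gained by $v$ when it becomes free but was not free before. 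A point worth flagging here, and throughout, is that the case hypotheses feed back into the verification --- for instance, $\beta$ can hold with all free vertices on the far side only if every $u\in V(B)\setminus\{v\}$ has $r(u)\le 1$, which is exactly what keeps the collapsed instance equivalent.

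The branch $\beta\vee\rho={\tt false}$ is the crux. Both sides genuinely need solution vertices; for $i\in\{0,\dots,r(v)\}$ the algorithm computes an optimum $S_i$ of $B$ with $v$ free and requirement $i$. One shows $|S_i|$ is nondecreasing in $i$, that $S_0$ is non-empty and (after discarding $v$, which is free) contains some $s\in V(B)\setminus\{v\}$, and --- using that $B$ is biconnected, with the case $B=K_2$ immediate --- that there is a $v$--$s$ fan of order $2$ inside $B$; hence $|S_2| = |S_0|$ and $i^* := \max\{j\le r(v): |S_j| = |S_0|\}\ge\min\{r(v),2\}$. Consequently the requirement $r(v)-i^*+2$ passed to the recursive call on $R$ is at most $\max\{r(v),2\}\le r_{\max}$; iterating this observation, and noting that the update in the ``exactly one'' branch cannot raise $r(v)$ since $\kappa_G(v,F\setminus V(H))\ge|\{v\}\setminus F|$ whenever it fires, no requirement produced anywhere in the recursion ever exceeds $r_{\max}$. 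One then verifies that $S_{i^*}\cup S_R$ is a minimum vector connectivity set for $(G,F,r)$: feasibility is a routing argument at $v$ ($v$ realises $i^*$ of its disjoint paths inside $B$ towards $S_{i^*}$ and the rest inside $R$, where it is free), and optimality follows from an exchange argument --- from any optimum one extracts a feasible solution of $B$ with $v$ free of some requirement $i\le r(v)$ and a feasible solution of $R$ with $v$ free of requirement $r(v)-i$, the sum of whose sizes is at least $|S_{i^*}|+|S_R|$ by the choice of $i^*$. I expect this branch --- simultaneously justifying the constant $+2$ and the rule for $i^*$, for both feasibility and optimality --- to be the main obstacle; the ``exactly one'' branch needs a similar but lighter argument.

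It remains to bound the running time. Each call to \texttt{FSVecCon} makes at most one recursive \texttt{FSVecCon} call, always on a graph with strictly fewer blocks, so the recursion is a path of length at most the number of blocks of $G$, that is $\mathcal{O}(|V(G)|)$; as $\mathcal{G}$ is hereditary and contains every block of $G$, it contains every block of every graph arising in the recursion, so every call to \texttt{FSVecCon-biconnect} is legitimate, and by the previous paragraph it is always invoked on an instance with all requirements $\le r_{\max}$, hence runs in $\mathcal{O}(T)$. At each of the $\mathcal{O}(|V(G)|)$ levels the algorithm performs the feasibility test of $\emptyset$, the computation of $\beta$ and of $\rho$, and one evaluation of $\kappa_G(v,\cdot)$ --- altogether $\mathcal{O}(|V(G)|)$ tests of the form ``is $u$ $r(u)$-linked to $Q$?'', each of which is a unit-capacity maximum-flow computation in the standard vertex-splitting network that one stops as soon as the flow value reaches $r(u)\le r_{\max}$, costing $\mathcal{O}(\min\{r_{\max},|V(G)|^{1/2}\}\cdot\min\{r_{\max}|V(G)|,|E(G)|\})$ by standard unit-capacity max-flow bounds combined with a sparse $r_{\max}$-connectivity certificate of the network. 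Multiplying the $\mathcal{O}(|V(G)|^2)$ tests by this cost yields the first term of \eqref{eq:time-bound}, while the $\mathcal{O}(r_{\max})$ calls to \texttt{FSVecCon-biconnect} made in the $\beta\vee\rho={\tt false}$ branch, summed over the $\mathcal{O}(|V(G)|)$ levels, give the second term $\mathcal{O}(|V(G)|\,r_{\max}\,T)$.
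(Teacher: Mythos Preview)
Your approach is the same as the paper's: you prove correctness of the procedure \texttt{FSVecCon} by induction on the number of blocks, splitting according to the Booleans $\beta,\rho$, and you bound the running time by counting $\mathcal{O}(|V|)$ recursion levels, $\mathcal{O}(|V|)$ $k$-linkedness tests per level, and $\mathcal{O}(r_{\max})$ calls to the biconnected oracle per level. The overall structure, the handling of the ``$\emptyset$ feasible'', ``$\beta\wedge\rho$'' and ``exactly one of $\beta,\rho$'' branches, and the time analysis all match the paper.

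There is one genuine slip in the $\beta\vee\rho={\tt false}$ branch. You justify $i^*\ge 2$ by saying that biconnectivity of $B$ yields ``a $v$--$s$ fan of order $2$ inside $B$'' for some $s\in S_0$. By the paper's definition of a fan, a $v$--$\{s\}$ fan has order at most $1$ (any two $v$--$s$ paths share $s$), so this step is false as written, and two internally disjoint $v$--$s$ paths still only certify that $v$ is $1$-linked to $\{s\}$. The conclusion $i^*\ge\min\{r(v),2\}$ is nevertheless correct, and for a simpler reason that does not use biconnectivity at all: since $\beta={\tt false}$ we have $S_0\neq\emptyset$, we may take $v\notin S_0$ (as $v$ is free with requirement $0$), and then a single $v$--$S_0$ path together with the trivial path at the free vertex $v$ witnesses that $v$ is $2$-linked to $S_0\cup(F\cap B)\cup\{v\}$; hence $S_0$ is feasible for $r_2$ and $|S_2|=|S_0|$. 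The paper reaches the same bound $r(v)-i^*+2\le r_{\max}$ by a short contradiction argument (assuming the bound fails forces $i^*=1$, hence $S_0=\emptyset$, contradicting $\beta={\tt false}$); your direct route is arguably cleaner once the justification is fixed.

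A second point: your optimality sketch in this branch (``extract from any optimum a feasible $B$-solution with $v$ free of requirement $i$ and a feasible $R$-solution with $v$ free of requirement $r(v)-i$, and compare'') hides real work. The paper needs a case analysis on whether $v\in S^*$ and on how $\kappa_G(v,(S^*\cup F\cup\{v\})\cap B)$ compares to $i^*$; in particular, the cases $k\ge i^*+1$ and $k\le i^*$ are handled differently, and the constant $+2$ is what makes the accounting close in each case. Your sketch is in the right direction, but as stated it does not yet explain why the sum of the two extracted pieces is always at least $|S_{i^*}|+|S_R|$ rather than, say, $|S_i|+|S_{R,i}|$ for the particular $i$ determined by $S^*$.
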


\begin{sloppypar}
\begin{proof}
Assume that {\tt FSVecCon-biconnect} is an algorithm that in polynomial time ${\mathcal O}(T)$
correctly solves the {\sc Free Vector Connectivity} problem on instances $(G,F,r)$ such that $G\in{\cal G}$ is biconnected and
$r:V(G)\to\{0,1,\dots,r_{\max}\}$. We claim that {\bf Procedure {\tt FSVecCon}}
correctly solves {\sc FreeVecCon} in time given by (\ref{eq:time-bound}).

We will first prove the correctness and optimality of the solution returned by {\tt FSVecCon}. At the end, we will analyze the time complexity of the algorithm.

\medskip

\noindent
{\bf Lines 1--3.}
If $G$ is biconnected, then in lines 1--3, the solution is constructed using procedure
{\tt FSVecCon-biconnect}. In this case, feasibility and optimality follow from the assumption on
{\tt FSVecCon-biconnect} which is a polynomial time algorithm  for {\sc FreeVecCon}
restricted to instances $(G,F,r)$ such that $G$ is biconnected and $r:V(G)\to\mathbb{Z}$.

\medskip

If $G$ is not biconnected, then
in line 4, the algorithm checks if the trivial solution works, and if not,
then in lines 5--8, the algorithm identifies a leaf block $B$ and
computes the Boolean values $\beta$ and $\rho.$ These values are meant to indicate whether
the set of free vertices together with the cut vertex satisfy the connectivity requirements in the
leaf block ($\beta$) and in the  remaining part of the graph ($\rho$), respectively.

\medskip
More precisely, let $B$ be a leaf block of $G$ and let $v\in V(B)$ be the cut vertex of $G$ contained in $B$.
Let $R = G-(V(B)\setminus\{v\})$. The algorithm computes the values
$$\beta = \left\{
               \begin{array}{ll}
                 \texttt{true}, & \hbox{if every vertex $u\in V(B)\setminus \{v\}$ is $r(u)$-linked to $F\cup \{v\}$;} \\
                 \texttt{false}, & \hbox{otherwise.}
               \end{array}
             \right.$$
and $$\rho = \left\{
               \begin{array}{ll}
                 \texttt{true}, & \hbox{if every vertex $w\in V(R)\setminus \{v\}$ is $r(w)$-linked to $F\cup \{v\}$;} \\
                 \texttt{false}, & \hbox{otherwise.}
               \end{array}
             \right.$$

\medskip

The remaining part of the computation is based on the values $\beta$ and $\rho$ and may imply recursive calls of the algorithm.
In the analysis below, whenever the meaning will be clear from the context, we will not distinguish between a graph and its vertex set.
Let ${OPT}$ denote the minimum size of a vector connectivity set for $(G,F,r)$.

\medskip

\noindent
{\bf Case A.}
$\beta = \rho = \texttt{true}$.

This together with the fact that the empty set is not a solution implies that ${OPT} = 1.$ In fact,
the assumption that $\beta = \rho = \texttt{true}$ implies that $\{v\}$ is a vector connectivity set for $(G,F,r).$

\medskip

\noindent
{\bf Case B.} Exactly one of the values $\beta$ and $\rho$ is true.

This means that the cut vertex $v$ and $F$
can satisfy the requirements of only one of the two parts of the graph on the two sides of the cut vertex. The algorithm is then
recursively invoked on the part of the graph in which the requirements are not satisfied. In the following we show that the solution so
computed is indeed a solution for the original problem. Since the two cases are similar to each other, we shall limit ourselves to
discuss only the case where $\beta = {\tt true}$ and $\rho = {\tt false}.$
Under this assumption, we have $H = R$, and the condition
$F \not \subseteq V(H) \setminus \{v\}$ in line 12 is equivalent to the condition
$F \cap B\neq\emptyset$.

\noindent
{\em Observation 1.} Since $\rho = \texttt{false}$, there exists a vertex
$w\in V(R)\setminus\{v\}$ that is not $r(w)$-linked to $F\cup\{v\}$.
In particular, $w$ is not $r(w)$-linked to $F$, therefore any solution $S$ to
the {\sc FreeVecCon} problem on $(R, F \cap R, r'),$ with $r'(w) = r(w)$ for each $w \neq v,$ is non-empty.
The same holds true for the solutions to the problem on $(R, (F \cap R) \cup \{v\}, r'),$

\smallskip

\noindent
{\em Observation 2.}
There exists a minimum vector connectivity set $S^*$ for
$(G,F,r)$ such that $S^*\subseteq R$.  Indeed, if $S^*$ is a minimum vector connectivity set for $(G,F,r)$ and $u\in S^*\setminus R$, then
$(S^*\setminus\{u\})\cup\{v\}$ is a vector connectivity set for $(G,F,r)$.

\medskip
We shall split the analysis into two cases according to the intersection of $F \cap B$.
\medskip

\noindent
{\em Subcase B-1.} $F\cap B = \emptyset$.

Then the algorithm is called recursively on $(R, F, r|_{R})$ and the solution obtained is returned.

\medskip

{\it Feasibility.}
Since $\beta = \texttt{true}$, every vertex $u\in B\setminus\{v\}$ is $r(u)$-linked to $F\cup \{v\}$.
Since $F\cap B=\emptyset$ and $v$ is a cut vertex, every vertex $u\in B\setminus\{v\}$
is $r(u)$-linked to $\{v\}$, hence $r(u)\le 1$.
On the other hand,  if $S$ is the solution returned, which is non-empty by Observation 1,  every vertex $u\in B\setminus\{v\}$ is
$r(u)$-linked to $S$ (and hence to $S\cup F$). Since $S$ is a vector connectivity set for  $(R, F, r|_{R})$,
we have that $S$ is also a vector connectivity set for $(G,F,r)$, which establishes feasibility.

{\it Optimality.}
Let $S^*$ be a minimum vector connectivity set for $(G,F,r)$ such that $S^*\subseteq R$ (such a set exists by Observation 2).
But now, every vertex $w\in R\setminus S^*$ is $r(w)$-linked to $S^*\cup F$.
Since $S^*\cup F\subseteq R$, every path connecting $w$ to $S^*\cup F$ lies entirely in $R$.
Thus, $S^*$ is a vector connectivity set for  $(R, F, r|_{R})$. By the choice of $S$, we have $|S|\le |S^*|$, which establishes optimality.

\medskip

\noindent
{\it Subcase B-2.} $F \cap B \neq \emptyset.$
In this case, we define a new requirement assignment for the vertices in $R$ as follows.
Let $r':V(R)\to\{0,1,\dots,r_{\max}\}$ be defined as
$$r'(w) = \left\{
                                                     \begin{array}{ll}
                                                       \max\{ r(v)-\kappa_G(v,F\cap(B\setminus\{v\})) + |\{v\} \setminus F|,0\} & \hbox{if $w = v$;} \\
                                                       r(w), & \hbox{otherwise.}
                                                     \end{array}
                                                   \right.$$
for all $w\in V(R)$.
Then, the algorithm is recursively executed on $(R, (F\cap R) \cup \{v\}, r')$ and the obtained solution is returned.

\medskip

{\it Feasibility.}
Let $S$ be the solution returned in this step.
By definition, and Observation 1 we have $\emptyset \neq S\subseteq R$, and every vertex $w\in R\setminus S$ is $r'(w)$-linked to $(S\cup (F\cap R))\cup \{v\}$ (in $R$).
To establish feasibility, we need to verify that every vertex $w\in G\setminus S$ is $r(w)$-linked to $S\cup F$ (in $G$).
If $w\in R\setminus \{v\}$, then the condition holds due to above property of $S$ and the fact that $r(w) = r'(w)$.
If $w\in B\setminus\{v\}$, by  the assumption that $\beta = \texttt{true}$, it follows
that $w$ is $r(w)$-linked to $F\cup \{v\}$. Now, if  $v\in F$, it immediately follows that
$w$ is $r(w)$-linked to $S\cup F$. On the other hand, if $v \not \in F,$ it is easy to see that $w$ is $r(w)$-linked to $S \cup F$ because
of $\emptyset \neq S\subseteq R$ since we can extend the path to $v$ to a path to $S.$

Finally, suppose that $w = v$.
Let $k = \kappa_G(v,F\cap(B\setminus\{v\}))$.
Vertex $v$ is $k$-linked to $F\cap (B\setminus\{v\})$ in $G$ (equivalently: in $B$).
If $k\ge r(v)$ then $v$ is also $r(v)$-linked (in $G$) to $S\cup F$.
If $k < r(v)$, then $r(v)-k + |\{v\} \setminus F|\ge 0$, and
$v$ is also $(r(v)-k+|\{v\}\setminus F|)$-linked in $R$ to
$(S\cup (F \cap R)) \cup \{v\}$. This immediately implies that:
(i) if $v \in F$ then
$v$ is $r(v)$-linked (in $G$) to $S \cup F$;
if $v \not \in F$ then $v$ is  $(r(v)+1)$-linked to
$S \cup F \cup \{v\},$ hence it is  $r(v)$-linked (in $G$) to $S \cup F.$

\medskip

{\it Optimality.}
Let $S^*$ be a minimum vector connectivity set for $(G,F,r)$ such that $S^* \subseteq R$ (such a set exists by Observation 2).
It suffices to show that $S^*$ is a vector connectivity set for
$(R, (F\cap R)\cup\{v\}, r')$, since then we will have $|S|\le |S^*|$, establishing optimality.
Let $w\in R\setminus S^*$.
Since $S^*$ is a vector connectivity set for $(G,F,r)$, vertex $w$ is \hbox{$r(w)$-linked} to $S^*\cup F$ in $G$.
If $w\neq v$, then $r(w) = r'(w)$, hence $w$ is $r'(w)$-linked to $S^*\cup F$ in $G$.
Since $v$ is a cut vertex, this implies that
$w$ is $r'(w)$-linked to $S^*\cup (F\cap R)\cup \{v\}$ in $R$.
Suppose now that $w = v$. If $r'(v) = 0$, then $v$ is trivially $r'(v)$-linked in to $S^*\cup (F\cap R)\cup \{v\}$ in $R$.
If $r'(v) > 0$, then $r'(v) = r(v)-\kappa_G(v,F\cap(B\setminus\{v\}))+|\{v\}\setminus F|$,
and $v$ is $r(v)$-linked to $S^*\cup F$ in $G$.
Therefore
\begin{eqnarray*}
&&\kappa_G(v,F\cap(B\setminus\{v\})) + \kappa_R(v,S^*\cup (F\cap R)\cup \{v\})\\&=&
\kappa_G(v,S^*\cup F\cup \{v\})\\
&=&
\kappa_G(v,S^*\cup F) + |\{v\}\setminus F|  \\
&\ge& r(v)+|\{v\}\setminus F|\\
&=& \kappa_G(v,F\cap(B\setminus\{v\}))+r'(v)\,,
\end{eqnarray*}
and $v$ is again $r'(v)$-linked to $S^*\cup (F\cap R)\cup \{v\}$ in $R$.

\bigskip

\noindent
{\bf Case C.} $\beta = \rho = {\tt false}.$

The idea is to first compute a solution $S_B$ for the block $B$ that does not include
vertex $v$ and which maximizes the value $k$ such that $v$ is $k$-linked to $S_B \cup (F \cap B).$  To this aim, we compute the
sets $S_i,$ for $i=0,\ldots, r(v),$ defined as the solution to the instance $(B, (F\cap B)\cup\{v\}, r_i)$, where $r_i$ is the
restriction of $r$ to $B$, modified so that the requirement of $v$ is $r_i(v) = i.$

In formulae, for all $i \in \{0,1,\ldots, r(v)\}$,
let $r_i$ be a new requirement assignment defined as
$$r_i(u) = \left\{
                                                     \begin{array}{ll}
                                                       i, & \hbox{if $u = v$;} \\
                                                       r(u), & \hbox{otherwise.}
                                                     \end{array}
                                                   \right.$$
Using algorithm {\tt FSVecCon-biconnect}, compute an optimal solution $S_i$
 to the {\sc FreeVecCon} problem on $(B, (F\cap V(B))\cup \{v\}, r_i)$.
Let $i^*$ be the maximum in $\{0, 1, \ldots, r(v)\}$ such that $|S_{i^*}| = |S_0|.$

\medskip
\noindent
{\it Observation 3.} $v\not\in S_{i^*}$. Indeed, the set $S = S_{i^*}\setminus \{v\}$ is a vector connectivity set for $(B,(F\cap B)\cup\{v\},r_0)$.
Thus $|S_0| \le |S| \le |S_{i^*}|= |S_0|$, which implies that $|S| = |S_{i^*}|$ and consequently $v\not\in S_{i^*}$.
In particular, the set $T = S_{i^*} \cup (F \cap B) \cup \{v\}$ satisfies all the requirements in $B \setminus \{v\}$ and
since $v\not\in S_{i^*}$, vertex $v$ is $i^*$-linked to $T.$

%
\medskip
We set  $S_B = S_{i^*}.$
\medskip

{\it Feasibility.}
First, let us argue that we have $r(v)-i^*+2\le r_{\max}$,\label{thispage} thus the instance
$(R, (F \cap R)\cup \{v\}, r|_{R})$ from the recursive call in line 24 has requirements at most $r_{\max}$.
Indeed, suppose for a contradiction that $r(v)-i^*+2\ge r_{\max}+1$.  This implies that $r(v)\ge i^*+r_{\max}+1$.
Since $r(v)\ge 1,$ by the definition of $i^*$ it also follows that $i^*\ge 1$. Hence, it must be $r(v) = r_{\max}$ and $i^*= 1$.
By definition of $i^*$, vertex $v$ is not $2$-linked to $(S_0\cup (F\cap B))\cup \{v\}$, in particular,
$S_0 = \emptyset$.
Consequently, every vertex $u\in B\setminus \{v\}$ is $r(u)$-linked to $S_0\cup (F \cap B) \cup \{v\} \subseteq F\cup \{v\}$, which is
a contradiction to the assumption that $\beta = {\tt false}$.

Since $\beta$ and $\rho$ are both {\tt false}, it follows that $|S_B \cap (B \setminus \{v\})| \geq 1$ and
$|S_R \cap (R \setminus \{v\})| \geq 1.$ Therefore, from $S_R$ (resp.~$S_B$) there is a path to $v$.
Each $w \in (B \setminus S_B) \setminus \{v\}$ (resp.~$w \in (R \setminus S_R) \setminus \{v\}$) is $r(w)$-linked to $S_B \cup (F \cap B) \cup \{v\}$
(resp.~$S_R \cup (F \cap R) \cup \{v\}$). Therefore, if one of such paths is to $v$, it can be extended to a path to
$S_R$ (reps.~$S_B$). Hence $w$ is $r(w)$-linked to $S_B \cup S_R \cup F$, as required.

As for $v,$ by the definition of $S_R$ and $S_B$ we have that $v$ is at least $(i^*-1)$-linked to $S_B$ and
at least $(r(v) - i^* +1)$-linked to $S_R \cup (F \cap (R \setminus \{v\})).$ This follows from having required of $S_B$ and $S_R$ that
$v$ is $i^*$-linked to $S_B \cup (F \cap B) \cup \{v\},$ and $v$ is $(r(v)-i^*+2)$-linked to $S_R \cup (F \cap R) \cup \{v\}.$
Altogether we have $\kappa_G(v, S_B \cup S_R \cup F) \geq r(v),$ as required.

\medskip

{\it Optimality.}
Recall that $v \not \in S_B = S_{i^*}$ (by Observation 3).
 Let $S^*$ be an optimal solution.
Because of $\rho = {\tt false},$ we have $|S^* \cap (R \setminus \{v\})| \geq 1.$
Analogously, $\beta = {\tt false}$ implies $|S^* \cap (B \setminus \{v\})| \geq 1.$
Furthermore, since $S^* \cap (B \setminus \{v\})$ is a vector connectivity set for
$(B, (F \cap B) \cup \{v\}, r_0)$, we have $|S^* \cap (B \setminus \{v\})|\ge |S_0| = |S_B|$.

We now argue according to several cases, depending on whether $v\in S^*$ or not and depending on the value of $i^*$.

\noindent
{\em Case 1.} $v\in S^*$.

In this case, the set $S^* \cap R$ is a feasible solution to the instance
$(R, (F \cap R) \cup \{v\}, r'),$ where
 $r'(w) = r(w)$ for each $w \in R \setminus \{v\}$ and $r'(v) = r(v) - i^*+2.$
Hence, $|S^* \cap R|\ge |S_R|$ and we have
$$|S^*| \ge |S^* \cap (B\setminus \{v\})| + |S^* \cap R| \geq |S_B| + |S_R| = |S_B \cup S_R|$$
where the last equality follows by $S_B \cap S_R \subseteq \{v\} \not \subseteq S_B$ (by Observation 3).

\medskip
\noindent
{\em Case 2.} $v\not \in S^*$.

\medskip
\noindent
{\em Subcase 2.1.} $i^* = r(v)$. In this case we have   $|S_0| = |S_1| = \ldots =  |S_{r(v)}|.$

We have $S_B = S_{r(v)}$ and $S_R$ is the optimal solution to the instance
$(R, (F \cap R) \cup \{v\}, r'),$ where $r'(w) = r(w)$ for each $w \neq v$ and $r(v) = 2.$
Note that the set $S^* \cap R$ is a feasible solution for the instance $(R, (F \cap R) \cup \{v\}, r').$ This easily follows from
\hbox{$S^* \cap R = S^* \cap (R \setminus \{v\}) \neq \emptyset$}, which guarantees the fact that $v$ is at least $2$-linked to
$S^* \cup ((F \cap R) \cup \{v\})$.
Therefore, we have $|S^* \cap R| \geq |S_R|.$
Altogether we get
$$|S^*| \geq  |S^* \cap (B\setminus \{v\})| + |S^* \cap R| \geq  |S_B| + |S_R| = |S_B \cup S_R|.$$

\medskip
\noindent
{\em Subcase 2.2.} $i^* < r(v)$. In this case we have $|S_0| = |S_1| = \ldots = |S_{i^*}| < |S_{i^*+1}|$.

Let $k = \kappa_G(v, (S^* \cup F \cup \{v\}) \cap B)$.

Suppose first that $k\ge i^*+1$. Then the set $S^*\cap B$ is a feasible solution for the instance
$(B, (F \cap B) \cup \{v\}, r_{i^*+1})$, and hence $|S^*\cap B|\ge |S_{i^*+1}|\ge |S_{i^*}|+1 = |S_{B}|+1$.
Moreover,
the set $(S^* \cap R)\cup\{v\}$ is a solution to the instance
$(R, (F \cap R) \cup \{v\}, r'),$ where
 $r'(w) = r(w)$ for each $w \in R \setminus \{v\}$ and $r'(v) = r(v) - i^*+2.$
Hence, $|(S^* \cap R) \cup \{v\})|\ge |S_R|$, that is,
$|S^* \cap R|\ge |S_R|-1$.
Altogether we get
$$|S^*| \geq  |S^* \cap B| + |S^* \cap R| \geq  |S_{B}|+1  + |S_R|-1 = |S_B \cup S_R|.$$

Suppose now that $k\le i^*$. In this case,
since $$k+\kappa_R(v,(S^*\cup F)\cap R) = \kappa_G(v, S^* \cup F \cup \{v\}) \ge r(v)+1\,,$$ we get
$\kappa_R(v,(S^*\cup F)\cap R)\ge r(v)-i^*+1$.
Hence, $v$ is $(r(v)-i^*+2)$-linked to \hbox{$(S^*\cap R)\cup (F \cap R) \cup \{v\}$},
and the set $S^* \cap R$ is a solution to the instance
$(R, (F \cap R) \cup \{v\}, r'),$ where  $r'(w) = r(w)$ for each $w \in R \setminus \{v\}$ and $r'(v) = r(v) - i^*+2.$
Hence,
$|S^* \cap R|\ge |S_R|$.
Recall that $|S^* \cap (B\setminus \{v\})|\ge |S_B|$, and so altogether we get
$$|S^*| \geq  |S^* \cap (B\setminus \{v\})| + |S^* \cap R| \geq  |S_B| + |S_R| = |S_B \cup S_R|.$$

 This establishes the correctness of the algorithm.
It remains to analyze its time complexity.

We need a single call to the procedure
{\tt FSVecCon-biconnect}. Given a graph $H$, a set $U\subseteq V(H)$, a vertex $v\in V(H)$ and an integer $k$,
it can be checked whether vertex $v$ is $k$-linked to $U$ in time
\begin{equation} \label{eq:k-linkedness-time}
{\mathcal O}\left(\min\left\{k^2|V(H)|, k|V(H)|^{3/2}, k|E(H)|, |V(H)|^{1/2}|E(H)| \right\}\right)
\end{equation}
The first two terms in the minimum come from the result of
Nagamochi and Ibaraki~\cite{MR1154589}. The last two terms come from the result of
Even and Tarjan~\cite{EvenTarjan}.

\begin{sloppypar}
Therefore, using the fact that $r(v)\le r_{\max}$ for all $v\in V$, the Boolean values $\beta$ and $\rho$ in lines 6--8
can be computed in time
$${\mathcal O}(|V(G)| \min\{r_{\max},  |V(G)|^{1/2}\} \min\{r_{\max}|V(G)|, |E(G)|\}).$$
The same time complexity can be similarly shown for line 9.
For lines 10--17, observe that lines 10 and 11 take ${\mathcal O}(1)$ time and line 12 takes ${\mathcal O}(|V(G)|)$ time.
Line 13 can also be implemented in
$${\mathcal O}(r_{\max}  \min\{r_{\max},  |V(G)|^{1/2}\} \min\{r_{\max}|V(G)|, |E(G)|\})$$ time since it suffices to check whether
$\kappa_G(v,F\setminus V(H))\ge i$ for each
\hbox{$i\in \{1,\ldots, r(v)+|\{v\}\setminus F|\}\subseteq \{1,\ldots, r_{\max}+1\}$}, and
we can use the bound in (\ref{eq:k-linkedness-time}).
In line 16, one recursive call to the procedure is made.
Lines 18--26 require one recursive call to the procedure, while the remaining steps in these lines can be implemented in time ${\mathcal O}(|V(G)|)$.
\end{sloppypar}

In either case, at most one recursive call to the algorithm is made, and the rest takes
${\mathcal O}(|V(G)| \min\{r_{\max},  |V(G)|^{1/2}\} \min\{r_{\max}|V(G)|, |E(G)|\})$ time.
Since there are at most ${\mathcal O}(|V(G)|)$ recursive calls,
this results in a total complexity of ${\mathcal O}(|V(G)|^2 \min\{r_{\max},  |V(G)|^{1/2}\} \min\{r_{\max}|V(G)|, |E(G)|\})$,
together with at most ${\mathcal O}(r_{\max}|V(G)|)$ calls of the procedure {\tt FSVecCon-biconnect}
(at most ${\mathcal O}(r_{\max})$ calls for each block of $G$).
Thus, the claimed time complexity follows.
\end{proof}
\end{sloppypar}

\section{Polynomiality in block graphs}\label{sec:poly}

Using Theorem~\ref{thm:reduction-general}, we can show that the {\sc FreeVecCon} problem is polynomial in the
class of block graphs, thus generalizing in two ways the polynomial-time solvability of {\sc VecCon} on trees due to Boros et al.~\cite{BHHM}.
To obtain this result, it suffices to argue that {\sc FreeVecCon} is polynomial on complete graphs.
Given a complete graph $K=(V,E)$ with vertex requirements $r:V\to\mathbb{Z}_+$ and a free set $F\subseteq V$,
a simple exchange argument applied separately to $F$ and $V-F$ implies that
there exists an optimal solution consisting of $k$ largest requirement vertices from $F$ and $\ell$ largest requirement vertices
from $V\setminus F$, for some $k\in \{0,\ldots, |F|\}$ and some $\ell\in \{0,\ldots, |V|-|F|\}$.
An optimal solution can thus be found by sorting the vertices in $F$ and $V\setminus F$ according to their requirements, and
checking which of the $O(|V(K)|^2)$ pairs $(k,\ell)$ as above minimizes the value of $k+\ell$ subject to the constraint that $k$ largest requirement vertices from $F$ and $\ell$ largest requirement vertices from $V-F$ form a vector connectivity set for $(K,F,r)$.
Therefore, Theorem~\ref{thm:reduction-general} yields the following.

\begin{theorem}
{\sc VecCon} and {\sc FreeVecCon} problems are solvable in polynomial time on block graphs.
\end{theorem}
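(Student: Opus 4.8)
The plan is to derive this from Theorem~\ref{thm:reduction-general}. A block graph is, by definition, a connected graph whose every block is a complete graph, and the class of block graphs is hereditary (an induced subgraph of a block graph is again a block graph). Hence, by Theorem~\ref{thm:reduction-general}, it suffices to solve {\sc FreeVecCon} in polynomial time on the biconnected graphs of this class, that is, on complete graphs; plugging such an algorithm into~(\ref{eq:time-bound}) yields a bound polynomial in $|V(G)|$ and $|E(G)|$, and the special case $F=\emptyset$ covers {\sc VecCon}. One should also observe, so that the recursion inside Procedure {\tt FSVecCon} stays legitimate, that we never leave the class: a leaf block $B$ of a block graph is complete, and the residual graph $R = G - (V(B)\setminus\{v\})$ is again a block graph.

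So fix a complete graph $K=(V,E)$, requirements $r:V\to\mathbb{Z}_+$, and a free set $F\subseteq V$. The observation driving everything is that in $K$ a vertex $v$ is adjacent to every other vertex, so for any target set $U\subseteq V$ the direct edges from $v$ to the vertices of $U\setminus\{v\}$ form a fan of order $|U\setminus\{v\}|$, while no fan can have more than $|U\setminus\{v\}|$ paths (their endpoints in $U$ must be distinct); thus $\kappa_K(v,U)=|U\setminus\{v\}|$. Consequently, whether a set $S\subseteq V$ is a vector connectivity set for $(K,F,r)$ depends only on the cardinality $|S\cup F|$ and on the requirements of the vertices lying outside $S$, and this feasibility condition only becomes easier to satisfy as vertices are added to $S$. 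A standard exchange argument now applies, separately within $F$ and within $V\setminus F$: if $u\in S\cap F$ and $u'\in F\setminus S$ with $r(u')\ge r(u)$, then replacing $u$ by $u'$ leaves both $|S|$ and $|S\cup F|$ unchanged and preserves feasibility, since the only vertex newly placed outside $S$ is $u$, with $r(u)\le r(u')$ and $u'$ feasible before; and symmetrically for a swap with $u,u'\in V\setminus F$ (again $|S\cup F|$ is unchanged, as neither vertex lies in $F$). Performing all such swaps inside $F$ and then all such swaps inside $V\setminus F$ produces a minimum vector connectivity set consisting of the $k$ largest-requirement vertices of $F$ together with the $\ell$ largest-requirement vertices of $V\setminus F$, for some $k\in\{0,\dots,|F|\}$ and $\ell\in\{0,\dots,|V\setminus F|\}$.

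The algorithm therefore sorts the vertices of $F$ and of $V\setminus F$ by requirement and, for each of the $O(|V|^2)$ pairs $(k,\ell)$, checks in polynomial time — via the cardinality-based feasibility condition above — whether the union of the $k$ largest-requirement vertices of $F$ and the $\ell$ largest-requirement vertices of $V\setminus F$ is a vector connectivity set for $(K,F,r)$; among the feasible pairs it outputs one minimizing $k+\ell$. This solves {\sc FreeVecCon} on complete graphs in polynomial time, and the claimed result follows from Theorem~\ref{thm:reduction-general}. There is no genuine obstacle here; the only point needing a little care is to state the feasibility condition precisely — in particular for a free vertex $v\in F\setminus S$, where $r(v)$ must be at most $|S\cup F|-1$ (or, under the convention that a free vertex is trivially linked to itself, there is no constraint) — and to verify that the two phases of exchanges do not interfere, both of which are routine.
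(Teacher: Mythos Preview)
Your proposal is correct and follows essentially the same approach as the paper: reduce to the biconnected case via Theorem~\ref{thm:reduction-general}, then solve {\sc FreeVecCon} on complete graphs by an exchange argument within $F$ and within $V\setminus F$, yielding an optimal solution of the form ``$k$ largest-requirement vertices of $F$ plus $\ell$ largest-requirement vertices of $V\setminus F$'' and enumerating the $O(|V|^2)$ pairs $(k,\ell)$. You supply more detail than the paper (the explicit formula $\kappa_K(v,U)=|U\setminus\{v\}|$, the verification that swaps preserve $|S\cup F|$, the remark on the convention for $v\in F\setminus S$), but the argument is the same.
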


A similar approach can be used to show that, more generally, {\sc FreeVecCon} is polynomial also for
the class of {\it block-cactus graphs}, that is, graphs every block of which is either a complete graph or a cycle (see, e.g.,~\cite{MR1271210}).
Indeed, it can be seen that {\sc FreeVecCon} is polynomially solvable on cycles. (We omit the easy proof.)
Block graphs and block-cactus graphs give further examples of graph classes with arbitrarily long induced paths
for which {\sc VecCon} is polynomially solvable (cf.~the discussion in~\cite{BHHM}).


For instances on arbitrary graphs with bounded requirements we can show the following result.

\begin{theorem}
The {\sc VecCon} problem can be solved in polynomial time if all requirements are at most $2.$
\end{theorem}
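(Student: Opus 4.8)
The plan is to derive the statement from Theorem~\ref{thm:reduction-general}. Applying that result with $\mathcal{G}$ equal to the (trivially hereditary) class of \emph{all} graphs, and recalling that {\sc VecCon} is the special case $F=\emptyset$ of {\sc FreeVecCon} and that we may assume the input graph to be connected, it suffices to solve {\sc FreeVecCon} in polynomial time on biconnected graphs whose requirements are at most $2$; Theorem~\ref{thm:reduction-general} then produces a polynomial-time algorithm for {\sc FreeVecCon}, hence for {\sc VecCon}, on arbitrary graphs with requirements bounded by $2$. Here one uses the fact that the bound $r_{\max}$ is an invariant of the recursion in Algorithm~\ref{algo2:main}: a vertex requirement is modified only in the lines that re-assign $r(v)$, and all of these but the assignment $r(v)\leftarrow r(v)-i^{*}+2$ are easily checked not to increase $r(v)$ (taking into account the guard preceding the other re-assignment), while for that one line the proof of Theorem~\ref{thm:reduction-general} shows the new value stays at most $r_{\max}$, using $r_{\max}\ge 2$. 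Thus every biconnected instance handed to the base-case procedure again has requirements at most $2$.

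It remains to handle the biconnected base case: an instance $(H,F,r)$ with $H$ biconnected and $r\colon V(H)\to\{0,1,2\}$. If $|V(H)|\le 2$ the instance is finite and is solved by inspection. Otherwise $H$ is $2$-connected, and the key point is that an optimal solution has size at most $2$. Indeed, using Menger's Theorem in the fan form already exploited in Proposition~\ref{prop:char-Menger}, one verifies that for every vertex $v$ of $H$ and every $T\subseteq V(H)\setminus\{v\}$ we have $\kappa_{H}(v,T)\ge\min\{|T|,2\}$: a $v$--$T$ separator of size $1$ would either be a cut vertex of $H$ (impossible, since $H$ is $2$-connected) or would force $|T|\le 1$. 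Consequently every two-element subset of $V(H)$ is already a vector connectivity set for $(H,F,r)$, so the optimum is at most $2$, and it can be found by first testing whether $\emptyset$ is feasible, then testing each singleton, and otherwise returning an arbitrary two-element set. Feasibility of a candidate set is checkable in polynomial time via Menger's Theorem and max-flow, exactly as noted for membership in {\sf NP} in the proof of Theorem~\ref{thm:NP-c}.

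Combining the two ingredients proves the theorem, the polynomial running time coming from the running-time bound in Theorem~\ref{thm:reduction-general} together with the (trivially polynomial) base case. The only step requiring genuine care is the Menger argument for the $2$-connected base case — specifically, ruling out a single-vertex separator between a vertex and a target set of size at least two — together with the (routine) verification that the recursion of Algorithm~\ref{algo2:main} never raises a requirement above $2$; I do not expect a real obstacle in either.
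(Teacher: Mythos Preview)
Your proposal is correct, but it takes a genuinely different route from the paper's own proof. The paper does \emph{not} invoke Theorem~\ref{thm:reduction-general}; instead it gives a direct, self-contained argument on the block tree of $G$: it repeatedly prunes leaf blocks all of whose non-cut-vertex vertices have requirement at most~$1$, then picks one non-cut-vertex from each leaf block of the pruned tree (with a separate handling of the degenerate case where the pruned tree has a single node). Feasibility follows since every remaining block lies on a path between two surviving leaf blocks, giving two disjoint paths; optimality follows by exhibiting, for each surviving leaf, a set that every feasible solution must hit (via Proposition~\ref{prop:char-Menger}). Your approach is more modular: you reuse the heavy machinery of Theorem~\ref{thm:reduction-general} and reduce everything to the trivial observation that in a $2$-connected graph any two-element set already satisfies all requirements at most~$2$. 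Both arguments ultimately rest on the block decomposition; the paper's buys an explicit description of an optimal solution (and a tight lower bound $|\mathcal{L}|$), while yours buys brevity and reusability at the cost of relying on the correctness and $r_{\max}$-preservation of Algorithm~\ref{algo2:main}. Your checks that the recursion never raises a requirement above~$2$ and that a one-vertex $v$--$T$ separator in a $2$-connected graph forces $|T|\le 1$ are both sound.
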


\begin{proof}
We work on the block tree $\cal T$ of $G$. For each leaf block $L$ with cut vertex $v$ such that for every vertex $u\in V(L)\setminus \{v\}$, we have
$r(u) \leq 1$, we delete from the graph the vertex set $V(L)\setminus \{v\}.$ We repeat this operation as long as possible.
Let ${\cal L}$ be the set of leaves of such a pruned block tree, which we denote~${\cal T}'$.
Our solution $S$ contains exactly one vertex $v_L$ from each block $L \in {\cal L}$ such that $v_L$ is not a cut vertex of the pruned graph,
unless $|V({\cal T}')| = 1$, in which case the pruned graph is biconnected, and the optimal solution is of size at most $2$.

Assuming
$|V({\cal T}')| > 1$, for each $L \in {\cal L}$ let ${\cal S}(L)$ be the set of blocks $B$ removed from ${\cal T}$ such that every block on the path from $L$ to $B$ in ${\cal T}$
has also been removed.
Clearly every deleted vertex $v$ of $G$ in a block of ${\cal S}(L)$ has requirement $1$ or $0$ and can be reached by a path
from the vertex selected in $S \cap L$.
For each remaining block $B$ we have that $B$ lies in ${\cal T}$ on a path between two leaf blocks of ${\cal T}'$ say $L_1, L_2 \in {\cal L}.$
Hence, each vertex in $B$ is reached by two disjoint paths, one starting in $S \cap L_1$ and one in $S \cap L_2.$

To argue optimality,
consider for each $L \in {\cal L}$ the set
$V_L$ of all vertices in $V(G)\setminus V(L)$ from a block in ${\cal S}(L).$ By Proposition 1, we have that
$V_L\cap S' \neq \emptyset$ for each feasible solution $S'$. Since for every two distinct $L_1, L_2 \in {\cal L}$, we have
$V_{L_1}\cap V_{L_2} = \emptyset,$ we conclude that any solution $S'$ must satisfy $|S'| \geq |{\cal L}|.$
\end{proof}

\section{Conclusion}\label{sec:concl}

We conclude with some questions for future research related to the \hbox{\sc VecCon} problem.
\begin{enumerate}
\item Recall that if in \hbox{\sc VecCon} each path is restricted to be of length exactly $1$, we get the {\sc Vector Domination} problem. If we set $r(v) = d(v)$ in the {\sc Vector Domination} problem, we get the {\sc Vertex Cover} problem.  This motivates the following question: What is the computational complexity of \hbox{\sc VecCon} on input instances such that $r(v) = d(v)$ for every vertex $v\in V$?
\item \begin{sloppypar}The \hbox{\sc VecCon} problem is polynomial for trees, block graphs, and split graphs, all subclasses of chordal graphs. What is the complexity of \hbox{\sc VecCon} in chordal graphs?\end{sloppypar}
\item Cographs and split graphs do not have induced paths on $4$ and $5$ vertices, respectively.
What is the complexity of \hbox{\sc VecCon} in the class of $P_k$-free graphs for $k\ge 5$?
In particular, is there a $k\ge 5$ such that \hbox{\sc VecCon} is {\sf NP}-complete for $P_k$-free graphs?

\item The \hbox{\sc VecCon} problem is {\sf APX}-hard if $R(V(G))= 4$ and polynomial if
$R(V(G))\le 2$. What is the complexity of \hbox{\sc VecCon} if $R(V(G))= 3$?
\item While
the \hbox{\sc VecCon} problem can be approximated in polynomial time within a factor of $\ln n +2$ on all $n$-vertex graphs~\cite{BHHM},
 Theorem~\ref{thm:APX} shows that there is no PTAS for \hbox{\sc VecCon} unless {\sf P = NP}.
Nevertheless, the exact (in-)approximability of \hbox{\sc VecCon} remains open, including the question from~\cite{BHHM}
whether \hbox{\sc VecCon} admits a polynomial-time constant-factor approximation algorithm on general graphs.
\end{enumerate}

Finally, we remark that {\sc VecCon} is fixed-parameter tractable with respect to solution size~\cite{KM,DL}.

\subsection*{Acknowledgements}

The second author is supported in part by the
Slovenian Research Agency, research program P$1$--$0285$ and research projects
J$1$--$4010$, J$1$--$4021$, J$1$--$5433$, and N$1$--$0011$:
GReGAS, supported in part by the European Science Foundation.

\bibliographystyle{abbrv}
\bibliography{VC-biblio}{}

\end{document}